\DeclareMathAlphabet{\pazocal}{OMS}{zplm}{m}{n}
\newcommand{\N}{\mathbb{N}}
\newcommand{\R}{\mathbb{R}}
\newcommand{\Q}{\mathbb{Q}}
\newcommand{\C}{\pazocal{C}}
\newcommand{\F}{\pazocal{F}}
\newcommand{\M}{\pazocal{M}}
\newcommand{\calL}{\pazocal{L}}
\newcommand{\sub}{\textit{sub}}
\newcommand{\fsub}{\textit{Fsub}}
\newcommand{\psub}{\textit{psub}}
\newcommand{\nsub}{\textit{nsub}}
\newcommand{\Deg}{\textit{deg}}
\newcommand{\cf}{\textit{cf}}
\newcommand{\nr}{\textit{Out}}
\newcommand{\oth}{\textit{Other}}
\newcommand{\DL}{\Delta(\mathcal{L})}
\newcommand{\EXPTIME}{\textbf{EXPTIME}}
\newcommand{\twoEXPSPACE}{\mbox{\rm\bf 2-EXPSPACE}}
\newcommand{\AP}{\textit{AP}}
\newcommand{\run}{\textit{Run}}
\newcommand{\NP}{\textbf{NP}}
\newcommand{\tran}[1]{\xrightarrow{#1}}
\tikzstyle{label}=[shape=circle,draw,inner sep=0pt,minimum size=5mm]
\tikzstyle{line} = [draw, color=black, -latex]
\tikzstyle{max}=[thick,draw,color=mybrown,minimum size=1.4em,inner sep=0em]
\tikzstyle{min}=[diamond,thick,draw,color=mybrown,minimum size=1.4em,%
\tikzstyle{stoch}=[circle,thick,draw,minimum size=1.5em,%
\tikzstyle{ran}=[circle,thick,draw,minimum size=1.4em,%
\tikzstyle{mc}=[rounded corners,thick,draw,minimum size=1.4em,%
\tikzstyle{tran}=[thick,draw,->,>=stealth,rounded corners]
\tikzstyle{loop left}=[tran, to path={.. controls +(150:.8) 
\tikzstyle{loop right}=[tran, to path={.. controls +(30:.8) 
\tikzstyle{loop above}=[tran, to path={.. controls +(60:.5) 
\tikzstyle{loop below}=[tran, to path={.. controls +(240:.8) 
\tikzstyle{bigstoch}=[circle,draw,minimum size=8ex,inner sep=0pt,font=\Large, very thick,text centered, fill=blue!20, fill opacity=0.2, draw=black!80, text opacity=1]
\tikzstyle{bigstoch2}=[circle,draw,minimum size=8ex,inner sep=0pt,font=\Large, very thick,text centered, fill=blue!20, fill opacity=0.2, draw=black!80, text opacity=1, double]
\tikzstyle{bigmin}=[diamond,draw,minimum size=8ex,inner sep=0pt,font=\Large, very thick,text centered, fill=blue!20, fill opacity=0.2, draw=black!80, text opacity=1]
\tikzstyle{bigtran}=[very thick,draw,-angle 60,font=\scriptsize, inner sep = 6pt]
\newcommand*{\pw}{\mathcal{P}}
\newcommand*{\nat}{\mathbb{N}}
\newcommand*{\m}{\mathbb{P}}
\newcommand*{\mc}{(S,P,v)}
\newcommand*{\fullmc}{M = (S,P,v)}
\newcommand*{\vp}{\varphi}
\newcommand*{\op}{\triangleright}
\newcommand*{\opx}{\operatorname{X}}
\newcommand*{\opu}{\operatorname{U}}
\newcommand*{\opf}{\operatorname{F}}
\newcommand*{\opg}{\operatorname{G}}
\newcommand*{\dl}{\Delta(\mathcal{L})}
\begin{document}
\title{The Satisfiability Problem for a Quantitave Fragment of PCTL}
%
%

\author{Miroslav Chodil \and
Anton\'{\i}n Ku\v{c}era\orcidID{0000-0002-6602-8028}}
\authorrunning{Chodil and Ku\v{c}era}
%
\institute{Masaryk University, Brno, Czechia\\ 
}
\maketitle              
\begin{abstract}
We give a sufficient condition under which every finite-satisfiable formula of a given PCTL fragment has a model with at most doubly exponential number of states (consequently, the finite satisfiability problem for the fragment is in $\twoEXPSPACE$). The condition is semantic and it is based on enforcing a form of ``progress'' in non-bottom SCCs contributing to the satisfaction of a given PCTL formula. We show that the condition is satisfied by PCTL fragments beyond the reach of existing methods.  

\keywords{Probabilistic temporal logics  \and Satisfiability \and PCTL.}
\end{abstract}



\section{Introduction}
\label{sec-intro}

Probabilistic CTL (PCTL) \cite{HJ:logic-time-probability-FAC} is a temporal logic applicable to discrete-time probabilistic systems with Markov chain semantics. PCTL is obtained from the ``standard'' CTL (see, e.g., \cite{Emerson:temp-logic-handbook}) by replacing the existential/universal path quantifiers with the probabilistic operator $P(\Phi) \bowtie r$. Here,  $\Phi$ is a path formula, $\bowtie$ is a comparison such as ${\geq}$ or ${<}$, and $r$ is a numerical constant. A~formula $P(\Phi) \bowtie r$ holds in a state $s$ if the probability of all runs initiated in $s$ satisfying~$\Phi$ is $\bowtie$-bounded by $r$.  The \emph{satisfiability problem for PCTL}, asking whether a given PCTL formula has a model, is a long-standing open question in probabilistic verification resisting numerous research attempts.

Unlike CTL and other non-probabilistic temporal logics, PCTL does not have a small model property guaranteeing the existence of a bounded-size model for every satisfiable formula. In fact, one can easily construct satisfiable PCTL formulae without \emph{any} finite model (see, e.g., \cite{BFKK:satisfiability}). Hence, the PCTL satisfiability problem is studied in two basic variants: (1) \emph{finite satisfiability}, where we ask about the existence of a finite model, and (2) \emph{general satisfiability}, where we ask about the existence of an unrestricted model. 

For the \emph{qualitative fragment} of PCTL, where the range of admissible probability constraints is restricted to $\{{=}0, {>}0, {=}1, {<}1\}$, both variants of the satisfiability problem are \EXPTIME-complete, and a finite description of a model for a satisfiable formula is effectively constructible \cite{BFKK:satisfiability}. Unfortunately, the underlying proof techniques are not applicable to general PCTL with unrestricted (quantitative) probability constraints such as ${\geq} 0.25$ or ${<}0.7$. 

To solve the \emph{finite} satisfiability problem for some PCTL fragment, it suffices to establish a computable upper bound on the size (the number of states) of a model for a finite-satisfiable formula of the fragment\footnote{Although there are uncountably many Markov chains with $n$ states, the edge probabilities can be represented symbolically by variables, and the satisfiability of a given PCTL formula in a Markov chain with $n$ states can then be encoded in first-order theory of the reals. For the sake of completeness, we present this construction in Appendix~\ref{app-encoding}.}. At first glance, one is tempted to conjecture the existence of such a bound for the whole PCTL because there is no apparent way how a finite-satisfiable PCTL formula $\varphi$ can ``enforce'' the existence of $F(|\varphi|)$ pairwise different states in a model of $\varphi$, where $F$ grows faster than any computable function. Interestingly, this conjecture is \emph{provably wrong} in a slightly modified setting where we ask about finite PCTL satisfiability in a \emph{subclass} of Markov chains $\M^k$ where every state has at most $k \geq 2$ immediate successors (the~$k$~is an arbitrarily large fixed constant). This problem is \emph{undecidable} and hence no computable upper bound on the size of a model in $\M^k$ exists  \cite{BFKK:satisfiability} (see \cite{BFKK:satisfiability-report} for a full proof). So far, all attempts at extending the undecidability proof of \cite{BFKK:satisfiability} to the class of unrestricted Markov chains have failed; it is not yet clear whether the obstacles are invincible. 

Regardless of the ultimate decidability status of the (finite) PCTL satisfiability, the study of PCTL fragments brings  important insights into the structure end expressiveness of PCTL formulae. The existing works \cite{KR:PCTL-unbounded,CHK:PCTL-simple} identify several fragments where every (finite) satisfiable formula has a model of bounded size and specific shape. In \cite{CHK:PCTL-simple}, it is shown that every formula $\varphi$ of the \emph{bounded fragment} of PCTL, where the validity of $\varphi$ in a state $s$ depends only on a bounded prefix of a run initiated in $s$, has a bounded-size tree model. In \cite{KR:PCTL-unbounded}, seven syntactically incomparable PCTL fragments based on $F$ and $G$ operators are studied. For each of these fragments, it is shown that every satisfiable (or finite-satisfiable) formula has a bounded-size model such that every non-bottom SCC is a singleton. It is also shown that there are finite-satisfiable PCTL formulae without a model of this shape. An example of such a formula is  
\[
   \psi \quad \equiv \quad G_{=1} \big(F_{\geq 0.5}(a \wedge F_{\geq 0.2} \neg a) \vee a\big) \ \wedge \ F_{=1}\, G_{=1}\,a \ \wedge\  \neg a
\]
In \cite{KR:PCTL-unbounded}, it is shown that $\psi$ is finite satisfiable\footnote{In \cite{KR:PCTL-unbounded}, the formula $\psi$ has the same structure but uses qualitative probability constraints.}, but every finite model of $\psi$ has a non-bottom SCC with at least two states, such as the Markov chain $M$ of Fig.~\ref{fig-model-psi}. 

\smallskip
\noindent
\textbf{Our contribution.} A crucial step towards solving the finite PCTL satisfiability is understanding the role of non-bottom SCCs. Intuitively, if a given PCTL formula $\varphi$ enforces a model with a non-bottom SCC, then the top SCC  must achieve some sort of ``progress'' in satisfying $\varphi$, and successor SCCs are required to satisfy only some ``simpler'' formulae. In this paper, we develop this intuition into an algorithm deciding finite satisfiability for various PCTL fragments beyond the reach of existing methods.   

More concretely, we design a sufficient condition under which every formula in a given PCTL fragment has a bounded model with at most doubly exponential number of states (consequently, the finite satisfiability problem for the fragment is in $\twoEXPSPACE$). The condition says that a progress in satisfying $\varphi$ is achievable by a SCC $C$ where $C$ has bounded number states and takes the form of a loop with one exit state of bounded outdegree (see Fig.~\ref{fig-loop}). Furthermore, the successor states are required to satisfy PCTL formulae strictly simpler than $\varphi$ in a precisely defined sense. Hence, a bounded model for these formulae exists by induction hypothesis, and thus we complete the construction of a bounded model for $\varphi$.

The above sufficient condition is ``semantic'' and it is satisfied by various mutually incomparable syntactic fragments of PCTL that are not covered by the methods of \cite{KR:PCTL-unbounded} (two of these fragments contain the formula $\psi$ presented above). Hence, our semantic condition can be seen as a ``unifying principle'' behind these concrete decidability results. 

In our construction, we had to address fundamental issues specific to quantitative PCTL. The basic observation behind the small model property proofs for non-probabilistic temporal logics (and also \emph{qualitative} PCTL) is that the satisfaction of a given formula in a given state $s$ is determined by the satisfaction of $\varphi$ and its subformulae in the successor states of~$s$. For (quantitative) PCTL, this is not true. For example, if we know whether or not the immediate successors of a state $s$ satisfy the formula $\opf_{\geq 0.2} \varphi$, we cannot say anything about the (in)validity of this formula in~$s$. What we need is a \emph{precise probability} of satisfying the path formula $\opf \varphi$ in the successors of~$s$. Clearly, it has no sense to filter a model according to the satisfaction of infinitely many formulae of the form $\opf_{\geq r} \varphi$. In our proof, we invent a method for extending the set of ``relevant formulae'' so that it remains bounded and still captures the crucial properties of states.

The methodology presented in this paper can be extended by considering SCCs with increasingly complex structure and analyzing the achievable ``progress in satisfaction'' of PCTL formulae (see Section~\ref{sec-concl} for more comments). We believe that this effort may eventually result in a decidability proof for the whole PCTL.

\smallskip
\noindent
\textbf{Related work.} The satisfiability problem for non-probabilistic CTL is known to be \EXPTIME-complete in \cite{EH:CTL-satisfiability}. The same upper bound is valid also for a richer logic of the modal $\mu$-calculus \cite{BB:temp-logic-fixed-points,FL:PDL-regular-programs}. The probabilistic extension of CTL (and also CTL$^*$) was initially studied in its qualitative form \cite{LS:time-chance-IC,HS:Prob-temp-logic}. The satisfiability problem is shown decidable in these works. A precise complexity classification of general and finite satisfiability, together with a construction of a (finite description of) a model is given in \cite{BFKK:satisfiability}. In the same paper, it is also shown that the satisfiability problem is undecidable when the class of admissible models is restricted to Markov chain with a $k$-bounded branching degree, where $k \geq 2$ is an arbitrary constant. A variant of the bounded satisfiability problem, where transition probabilities are restricted to $\{\frac{1}{2},1\}$, is proven \NP-complete
in \cite{BFS:bounded-PCTL}. The decidability of finite satisfiability for fragments of quantitative PCTL is established in the works \cite{KR:PCTL-unbounded,CHK:PCTL-simple} discussed above. 

The \emph{model-checking} problem for PCTL has been studied both for finite Markov chains (see, e.g., \cite{BK:PCTL-fairness,BA:MDP-PCTL,HK:quantitative-mu-calculus-LICS,BK:book}) and for infinite Markov chains generated by probabilistic pushdown automata and their subclasses \cite{EKM:prob-PDA-PCTL-LMCS,BKS:pPDA-temporal,EY:RMC-LTL-complexity-TCL}. The undecidability results for (finite) PCTL satisfiability in subclasses of Markov chains with bounded branching degree follow from the undecidability results for MDPs with PCTL objectives \cite{BBFK:Games-PCTL-objectives}.

\begin{figure}[t]
\centering
\begin{tikzpicture}[scale=1, every node/.style={scale=1}, x=1.3cm, y=1.3cm, font=\small]
   \node [stoch] (t) at (0,0)  {$a$};
   \node [stoch] (s) at (1,0)  {$\neg a$};
   \node [stoch] (u) at (0,-1) {$a$};
   \node [stoch,draw=none] (l) at (1.3,0) {$s$};
   \draw [tran]  (t.45) to node[above] {$0.6$} (s.135); 
   \draw [tran]  (s.225) to node[below] {$1$}  (t.315); 
   \draw [tran]  (t) to node[left] {$0.4$}     (u); 
   \draw [tran,loop right] (u) to node[right] {$1$} (u); 
\end{tikzpicture}
\caption{A Markov chain $M$ such that $s \models \psi$.}
\label{fig-model-psi}
\end{figure}
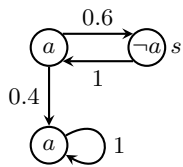

\section{Preliminaries}
\label{sec-prelim}

We use $\N$, $\Q$, $\R$ to denote the sets of non-negative integers, rational numbers, and real numbers,
respectively. We use the standard notation for writing intervals of real numbers, e.g., $[0,1)$ denotes the set of all  $r\in \mathbb{R}$ where $0 \leq r < 1$.




The logic PCTL \cite{HJ:logic-time-probability-FAC} is a probabilistic version of Computational Tree Logic  \cite{Emerson:temp-logic-handbook} obtained by replacing the existential and universal path quantifiers with the probabilistic operator $P(\Phi) \bowtie r$, where $\Phi$ is a path formula, $\bowtie$ is a comparison, and $r \in [0,1]$ is a constant. 

In full PCTL, the syntax of path formulae is based on the $\opx$ and $\opu$ (`next' and `until') operators. In this paper, we consider a variant of PCTL based on $\opf$ and $\opg$ operators. These operators are simplified forms of $\opu$, and capture the core of PCTL expressive power (see \cite{BK:book}).


\begin{definition}[PCTL]
\label{def-pctl}
   Let $\AP$ be a set of atomic propositions. The syntax of PCTL state and path formulae is defined by the following abstract syntax equations:
   \[
   \begin{array}{lcl}
      \varphi & ~~::=~~ & a \mid \neg a \mid \vp \wedge \vp \mid 
              \vp \vee \vp \mid P(\Phi) \op r\\
      \Phi & ::= &\opf \vp \mid \opg \vp 
   \end{array} 
   \]    
   Here, $a \in \AP$, $\op \in \{{\geq}, {>}\}$, and $r \in [0,1]$. The trivial probability constraints are 
\end{definition}
For the sake of simplicity, the trivial probability constraints `${\geq} 0$' and `${>}1$' are syntactically forbidden.  Since the formula $\Phi$ in the probabilistic operator $P(\Phi) \op r$ is always of the form $\opf\vp$ or  $\opg \vp$, we often write just $\opf_{\op r}\, \vp$ and $\opg_{\op r}\, \vp$
instead of $P(\opf\vp) \op r$ and $P(\opg\vp) \op r$, respectively. The probability constraint `${\geq} 1$' is usually written as `${=}1$'. The set of all state sub-formulae of a given state formula $\varphi$ is denoted by $\sub(\varphi)$. For a set $X$ of PCTL formulae, we use $\sub(X)$ to denote  $\bigcup_{\varphi \in X} \sub(\varphi)$.

Observe that the negation is applicable only to atomic propositions and the comparison ranges only over $\{{\geq}, {>}\}$. This is not restrictive because negations can be pushed inside, and formulae such as $\opf_{\leq r}\, \vp$ and $\opg_{< r}\, \vp$ are equivalent to $\opg_{> 1{-}r}\, \neg \vp$ and $\opf_{\geq 1{-}r}\, \neg \vp$, respectively.

PCTL formulae are interpreted over Markov chains where every state $s$ is assigned a subset $v(s) \subseteq \AP$ of atomic propositions valid in~$s$.
  
\begin{definition}[Markov chain]
    A Markov chain is a triple $M = (S,P,v)$, where $S$ is a finite or countably infinite set of \emph{states}, $P \colon S \times S \rightarrow [0,1]$ is a function such that $\sum_{t \in S} P(s,t)=1$ for every $s \in S$, and $v
    \colon S \rightarrow 2^{\AP}$.
\end{definition}

A \emph{path} in $M$ is a finite sequence $w = s_0 \ldots s_n$ of states such that \mbox{$P(s_i,s_{i+1}) > 0$} for all $i <n$. A \emph{run} in $M$ is an infinite sequence $\pi = s_0 s_1 \ldots$ of states such that every finite prefix of $\pi$ is a path in $M$. We also use $\pi(i)$ to denote the state $s_i$ of~$\pi$.

A \emph{strongly connected component (SCC)} of $M$ is a maximal $U \subseteq S$ such that, for all $s,t \in U$, there is a path from $s$ to $t$. A \emph{bottom SCC (BSCC)} of $M$ is a SCC $U$ such that for every $s \in U$ and every path $s_0\ldots s_n$ where $s = s_0$ we have that $s_n \in U$.

For every path $w = s_0 \ldots s_n$, let $\run(w)$ be the set of all runs starting with~$w$, and let $\m(\run(w)) = \prod_{i=0}^{n-1} P(s_i,s_{i+1})$. To every state $s$, we associate the probability space $(\run(s),\F_s,\m_s)$, where $\F_s$ is the $\sigma$-field generated by all $\run(w)$ where $w$ starts in $s$, and $\m_s$ is the unique probability measure obtained by extending $\m$ in the standard way (see, e.g., \cite{Billingsley:book}).

The \emph{validity} of a PCTL state/path formula for a given state/run of $M$ is defined inductively as follows:
\[
\begin{array}{lcl}
  s \models a & \mbox{ ~~iff~~ } & a \in v(s),\\
  s \models \neg a &  \mbox{ iff } & a \not \in v(s),\\
  s \models \vp_1 \wedge \vp_2 &  \mbox{ iff } & s \models \vp_1 \mbox{ and } s \models \vp_2,\\
  s \models \vp_1 \vee \vp_2 &  \mbox{ iff } & s \models \vp_1 \mbox{ or } s \models \vp_2,\\
  s \models P(\Phi) \op r  &  \mbox{ iff } &  \m_s(\{ \pi \in Run(s) \mid \pi \models \Phi \}) \op r,\\[1ex]
  \pi \models \opf \vp  &  \mbox{ iff } & \pi(i) \models \vp \mbox{ for some } i \in \N,\\
  \pi \models \opg \vp  &  \mbox{ iff } & \pi(i) \models \vp \mbox{ for all } i \in \N.
\end{array}
\]
For a set $X$ of PCTL state formulae, we write $s \models X$ iff $x \models \varphi$ for every $\varphi \in X$.

We say that $M$ is a \emph{model} of $\varphi$ if $s \models \varphi$ for some state $s$ of $M$. The \emph{(finite) PCTL satisfiability problem} is the question whether a given PCTL formula has a (finite) model.

A \emph{PCTL fragment} is a set of PCTL state formulae $\calL$ closed under state subformulae and changes in probability constraints, i.e., if $\opf_{\op r} \varphi \in \calL$ (or \mbox{$\opg_{\op r} \varphi \in \calL$}), then $\opf_{\geq r'} \varphi \in \calL$ (or $\opg_{\geq r'} \varphi \in \calL$) for every  $r'$.

\section{Results}
\label{sec-results}

In this section, we formulate our main results. 
As a running example, we use the formula $\psi$ of Section~\ref{sec-intro} and its model of Fig.~\ref{fig-model-psi}.

\begin{definition}
\label{def-closure}
  Let $\psi$ be a PCTL formula and $s$ a state in a Markov chain such that $s \models \psi$. The \emph{closure of $\psi$ in $s$}, denoted by $C_s(\psi)$, is the least set $K$ satisfying the following conditions:
  \begin{itemize}
     \item $\psi \in K$;
     \item if $\varphi_1 \vee \varphi_2 \in K$ and $s \models \varphi_1$, then $\varphi_1 \in K$;
     \item if $\varphi_1 \vee \varphi_2 \in K$ and $s \models \varphi_2$, then $\varphi_2 \in K$;
     \item if $\varphi_1 \wedge \varphi_2 \in K$, then $\varphi_1,\varphi_2 \in K$;
     \item if $\opf_{\op r} \varphi \in K$ and $s \models \varphi$, then $\varphi \in K$;
  \end{itemize} 
  Furthermore, for a finite set $X$ of PCTL formulae such that $s \models X$, we put $C_s(X) = \bigcup_{\psi \in X} C_s(\psi)$.
\end{definition}

Observe that $C_s(\psi)$ contains some but not necessarily \emph{all} subformulae of $\psi$ that are valid in $s$. In particular, there is no rule saying that if $\opg_{\op r} \varphi \in K$, then $\varphi \in K$. As we shall see, the subformulae within the scope of $\opg_{\op r}$ operator need special treatment. 


\begin{example}
For the formula $\psi$ and the state $s$ of our running example, we obtain
\[
   C_s(\psi) \quad = \quad \{\psi,\quad \opg_{=1} \big(\opf_{\geq 0.5}(a \wedge \opf_{\geq 0.2} \neg a) \vee a\big),\quad \opf_{=1}\, \opg_{=1}\,a ,\quad  \neg a\}
\] 
Observe that although $\opf_{=1}\, \opg_{=1}\,a \in \C_s(\psi)$, the formula $\opg_{=1}\,a$ is not included into  $\C_s(\psi)$ because $s \not\models \opg_{=1}\,a$.
\end{example}

The set $C_s(\psi)$ does not give a precise information about the satisfaction of relevant path formulae. Therefore, we allow for ``updating'' the closure with precise quantities.

\begin{definition}
\label{def-update}
   Let $X$ be a set of PCTL formulae and $s$ a state in a Markov chain such that $s \models \varphi$ for every $\varphi \in X$. The \emph{update of $X$ in $s$}, denoted by $U_s(X)$, is the set of formulae obtained by replacing every formula of the form $P(\Phi) \op r$ in $X$ with the formula $P(\Phi) \geq r'$, where $r' = \m_s(\{ \pi \in Run(s) \mid \pi \models \Phi \})$.
\end{definition}

Observe that $r' \geq r$, and the formulae of $X$ which are \emph{not} of the form  $P(\Phi) \op r$ are left unchanged by $U_s$. The $UC_s$ operator is defined by $UC_s(X) = U_s(C_s(X))$. Observe that $UC_s$ is idempotent, i.e., $UC_s(UC_s(X)) = UC_s(X)$.

\begin{example}
In our running example, we have that
\[
   UC_s(\psi) \quad = \quad \{\psi,\quad \opg_{=1} \big(\opf_{\geq 0.5}(a \wedge \opf_{\geq 0.2} \neg a) \vee a\big),\quad \opf_{= 1}\, \opg_{=1}\,a ,\quad  \neg a\}
\]   
because the probability constraint $r$ in the two formulae of the form $P(\Phi) \op r$ is equal to $1$ and cannot be enlarged.
\end{example}


In our next definition, we introduce a sufficient condition under which the finite satisfiability problem is decidable in a PCTL fragment $\calL$. 
\begin{definition}
\label{def-loop}
We say that a PCTL fragment $\calL$ is \emph{progressive} if for every finite set $X$ of PCTL formulae and every state $s$ of a finite Markov chain such that
\begin{itemize}
   \item $s \models X$,
   \item $X$ is closed and updated (i.e., $X = UC_s(X)$),
   \item $X \subseteq \calL$
\end{itemize}
there exists a \emph{progress loop}, i.e., a finite sequence $\mathcal{L} = L_0,\ldots,L_n$ of subsets of $\sub(X)$ satisfying the following conditions: 
\begin{itemize}
  \item[(1)] $X \subseteq L_i$ for some $i \in \{0,\ldots,n\}$;
  \item[(2)] $L_0,\ldots,L_n$ are pairwise different (this induces an upper bound on $n$);
  \item[(3)] for every $i \in \{0,\ldots,n\}$, we have that
  \begin{itemize} 
  \item if $a \in L_i$, then $\neg a \not\in L_i$;
  \item if $\varphi_1 \wedge \varphi_2 \in L_i$, then $\varphi_1,\varphi_2 \in L_i$;
  \item if $\varphi_1 \vee \varphi_2 \in L_i$, then $\varphi_1 \in L_i$ or $\varphi_2 \in L_i$;
  \item if $\opg_{\op r} \varphi \in L_i$, then $\varphi \in L_j$ for every $j \in \{0,\ldots,n\}$.
  \end{itemize}
\end{itemize}
Furthermore, let $\DL$ be the set of all $\varphi \in L_0 \cup \cdots \cup L_n$ such that one of the following conditions holds:
  \begin{itemize}
      \item $\varphi \equiv \opg_{\op r} \psi$;
      \item $\varphi \equiv \opf_{\op r} \psi$, $\psi \not\in L_0 \cup \cdots \cup L_n$;
      \item $\varphi \equiv \opf_{=1} \psi$, $\opf_{=1} \psi \in L_i$ for some $i$ such that $\psi \not\in L_i \cup \cdots \cup L_n$.
  \end{itemize}
We require that
\begin{itemize}
  \item[(4)] $s \models \DL$;
  \item[(5)] $s \not \models \psi$ for every formula of form $\opf_{\op r} \psi$ such that $\opf_{\op r} \psi \in \DL$;
  \item[(6)] $\cf_s(\DL) \subseteq \cf_s(X)$.\\
  Here, the set $\cf_s(Y)$ consists of all formulae $\opf \varphi$ such that $Y$ contains a formula of the form $\opf_{\op r} \varphi$, $s \not\models \varphi$, and there is a finite path from $s$ to a state~$t$ where $t \models \varphi$ and $t \not\models \opg_{=1}\psi$ for every formula $\opg \psi$ such that $\sub(Y)$ contains a formula of the form  $\opg_{\op r} \psi$ and $s \not\models \opg_{=1} \psi$. 
\end{itemize} 
\end{definition}

\begin{example}
\label{exa-loop}
  In our running example, consider $X = UC_s(\psi)$ . Then $L_0,L_1,L_2$, where
  \[
  \begin{array}{lcl}
     L_0 & = & \{ \psi,\ \opg_{=1} \big(\opf_{\geq 0.5}(a \wedge \opf_{\geq 0.2} \neg a) \vee a\big),\ 
                \opf_{\geq 0.5}(a \wedge \opf_{\geq 0.2} \neg a) \vee a,\\
         & & \  \opf_{\geq 0.5}(a \wedge \opf_{\geq 0.2} \neg a),\ \opf_{= 1} \opg_{=1}a ,\ \neg a \} \\
     L_1 & = & \{ \opf_{\geq 0.5}(a \wedge \opf_{\geq 0.2} \neg a) \vee a,\ a\}\\
     L_2 & = & \{ \opf_{\geq 0.5}(a \wedge \opf_{\geq 0.2} \neg a) \vee a,\ \opf_{\geq 0.5}(a \wedge \opf_{\geq 0.2} \neg a), a \wedge \opf_{\geq 0.2}\neg a,\ a,\ \opf_{\geq 0.2}\neg a \}\\
  \end{array}
  \]  
  is a progress loop for $X$ and $s$. Observe that 
  \[
   \DL = \{\opg_{=1} \big(\opf_{\geq 0.5}(a \wedge \opf_{\geq 0.2} \neg a) \vee a\big),\
                        \opf_{= 1} \opg_{=1}a \}\,.
  \]
  Furthermore, $X \subseteq L_0$ and $\DL \subseteq X$. 
\end{example}

\begin{figure}[t]
\centering
\begin{tikzpicture}[scale=1, every node/.style={scale=1}, x=1.5cm, y=1.5cm, font=\small]
   \node [stoch] (l1) at (0,0)  {$\ell_0$};
   \node [stoch] (l2) at (1,0)  {$\ell_1$};
   \node [stoch,draw=none] (l3) at (2,0) {};
   \node [stoch,draw=none] (l4) at (2.5,0) {};
   \node [stoch] (ln) at (3.5,0) {$\ell_n$};
   \node [stoch] (t1) at (4.5,.5)   {$t_1$};
   \node [stoch] (tm) at (4.5,-.5)  {$t_m$};
   \draw [tran]  (l1) to node[below] {$1$} (l2); 
   \draw [tran]  (l2) to node[below] {$1$} (l3); 
   \draw [tran]  (l4) to node[below] {$1$} (ln); 
   \draw [tran,-,dotted]  (l3) to (l4); 
   \draw [tran]  (ln) to node[above] {$x_1$} (t1); 
   \draw [tran]  (ln) to node[below] {$x_m$} (tm);
   \draw [tran]  (ln) -- +(0,.6) -- node[above] {$1 - \sum_{x_i}$} +(-3.5,0.6) -- (l1);
   \draw [tran,-,dotted]  (t1)+(0,-.3) to +(0,-.7);
   \node [draw=none,right=.2 of t1] (X1) {$X_{t_1} = UC_{t_i}(\DL_{t_i})$};
   \node [draw=none,right=.2 of tm] (Xm) {$X_{t_m} = UC_{t_m}(\DL_{t_m})$};
\end{tikzpicture}
\caption{A graph for a progress loop $L_0,\ldots,L_n$.}
\label{fig-loop}
\end{figure}
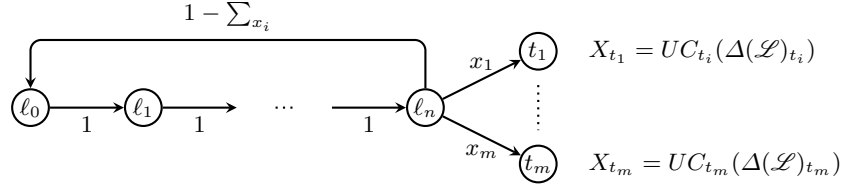

Intuitively, a progress loop allows to prove the existence of a bounded-size model for a finite-satisfiable formula $\psi$ where $\psi \in \calL$. Let us fix some (unspecified) finite Markov chain $M$ and a state $s$ of $M$ such that $s \models \psi$.  Initially, we put $X = UC_s(\psi)$. Then, we consider a progress loop $\mathcal{L} = L_0,\ldots,L_n$ for $X$ and~$s$, and construct the graph of Fig.~\ref{fig-loop}. The states $\ell_0,\ldots,\ell_n$ correspond to $L_0,\ldots,L_n$, and, as we shall see, $\ell_i \models L_i$ for every $i \in \{0,\ldots,n\}$ after completing our construction. Intuitively, the set $\DL$ contains formulae that are not satisfied by the loop itself, and must be ``propagated'' to the successors of $\ell_n$. The probabilities $x_1,\ldots,x_m$ are chosen so that $1 - \sum x_i$ is larger than the maximal $r \neq 1$ appearing in formulae of the form $\opf_{\op r} \varphi \in L_0 \cup \cdots \cup L_n$. This ensures that every $\ell_i$ visits every $\ell_j$ with high-enough probability. 

The loop is required not to spoil relevant formulae of the form $\opg_{\op r} \psi$ (see the last condition in (3)) and to satisfy almost all of the ``new'' formulae of the form  $\opf_{\op r} \psi$ that do not appear in $X$ and have been added to the loop because of the last condition in~(3). In Example~\ref{exa-loop}, such a ``new'' formula is, e.g.,  \mbox{$\opf_{\geq 0.5}(a \wedge \opf_{\geq 0.2} \neg a)$}. The ``new'' formulae that are not satisfied by the loop must satisfy the technical condition~(6) whose purpose is to ensure progress with respect to the measure defined in Section~\ref{sec-measure}. In Example~\ref{exa-loop}, there is no such formula, because \mbox{$\DL \subseteq X$}.

Now we explain how the successors $t_1,\ldots,t_m$ of $\ell_n$ are constructed, and what is the bound on~$m$. Recall that 
\[
   \DL = \big\{P(\Phi_1) \op r_1,\ \ldots,\ P(\Phi_v) \op r_v,\ \ldots,\ P(\Phi_u) \op r_u \big\} 
\]
where $\Phi_i \equiv \opf \varphi_i$ for $1\leq i \leq v$, and $\Phi_i \equiv \opg \varphi_i$ for $v < i \leq u$, respectively. Clearly, $u$ is bounded by the number of subformulae of the considered formula~$\psi$. For every state $t$ of $M$, let $\alpha_t$ be the $v$-dimensional vector such that 
\[
\alpha_t(i) = \m_s\big(\{ \pi \in Run(t) \mid \pi \models \Phi_i \}\big).
\]
Furthermore, let $B$ be the set of all states $t$ of $M$ such that $t$ either belongs to a BSCC of $M$ or $t \models \varphi_i$ for some $1 \leq i \leq v$. Since $M$ is finite, $B$ is also finite, but there is no upper bound on the size of $B$. For every $t \in B$, let $y_t$ be the probability of all runs initiated in $s$ visiting the state $t$ so that all states preceding the first visit to $t$ are not contained in $B$. We show that 
\[
   \alpha_s \quad \leq \quad \sum_{t \in B} y_t \cdot \alpha_t 
\]
Since $\sum_{t \in B} y_t = 1$, we can apply Carath\'{e}odory's theorem and thus obtain a subset $B' \subseteq B$ with at most $u+1$ elements such that $\sum_{t \in B} y_t \cdot \alpha_t$ lies in the convex hull of $\alpha_t$, $t \in B'$. That is,
\[
   \alpha_s \quad \leq \quad  p_1 \cdot \alpha_{t_1} + \cdots + p_m \cdot \alpha_{t_m}
\]
where $m \leq u+1$, $0 < p_i \leq 1$ for all $i \in \{1,\ldots,m\}$, $\sum_{i=1}^m p_i = 1$, and $B' = \{t_1,\ldots,t_m\}$. Let $\varrho > 0$ be a constant such that $1-\varrho \geq r$ for every $r \neq 1$  appearing in formulae of the form $\opf_{\op r} \varphi \in L_0 \cup \cdots \cup L_n$. For every $i \in \{1,\ldots,m\}$, the probability $x_i$ (see Fig.~\ref{fig-loop}) is defined by $x_i = p_i \cdot \varrho$.
Furthermore, for every $t_i \in B'$, we construct the set 
\[
  \DL_{t_i} = \big\{P(\Phi_1) \op \alpha_{t_i}(1),\ \ldots,\ P(\Phi_v) \op \alpha_{t_i}(v),\ \ldots,\ P(\Phi_u) \op \alpha_{t_i}(u) \big\} 
\]
and then the set $X_{t_i} = UC_{t_i}(\DL_{t_i})$. We have that $X_{t_i} \subseteq \calL$, and $X_{t_i}$ is \emph{smaller} than $X$ with respect to the measure defined in Section~\ref{sec-measure}. Hence, we proceed by induction, and construct a finite model of bounded size for $X_{t_i}$ by considering a progress loop for $X_{t_i}$ and $t_i$. Thus, we obtain the following theorem:

\begin{theorem}
\label{thm-main}
   Let $\calL$ be a progressive PCTL fragment. Then every finite-satisfiable formula $\psi \in \calL$ has a model with at most $a^{a^{a+5}}$ states where $a = |\sub(\psi)|$, such that every non-bottom SCC is a simple loop with one exit state (see Fig.~\ref{fig-loop}). Consequently, the finite satisfiability problem for $\calL$ is in \twoEXPSPACE. 
\end{theorem}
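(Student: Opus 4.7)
\medskip
\noindent
\textbf{Proof proposal.}
I plan to prove the bound by induction on the measure of Section~\ref{sec-measure}, with the inductive statement that every finite-satisfiable $X \subseteq \calL$ with $X = UC_s(X)$ at some state $s$ of a finite Markov chain admits a finite model of the promised size and shape. For the inductive step, fix such an $X$ and $s$. Progressivity yields a progress loop $\mathcal{L} = L_0,\ldots,L_n$ for $X$ and $s$, where by condition~(2) we have $n + 1 \leq 2^{|\sub(\psi)|}$. I build the gadget of Fig.~\ref{fig-loop}: create one state $\ell_i$ per $L_i$, valuate $\ell_i$ by the atoms in $L_i$ (consistent by the first bullet of condition~(3)), chain them by probability-$1$ edges $\ell_i \to \ell_{i+1}$ for $i < n$, and give $\ell_n$ a back-edge to $\ell_0$ of probability $1-\varrho$, where $\varrho > 0$ is chosen so that $1-\varrho \geq r$ for every $r \neq 1$ occurring in any $\opf_{\op r}$-formula of $L_0 \cup \cdots \cup L_n$; the remaining mass $\varrho$ from $\ell_n$ will be split among the successor states described below.

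To choose those successors, enumerate $\DL = \{P(\Phi_1)\op r_1,\ldots,P(\Phi_u)\op r_u\}$ with $\Phi_i = \opf \varphi_i$ for $i \leq v$, and associate to each state $t$ of $M$ the vector $\alpha_t \in \R^v$ given by $\alpha_t(i) = \m_t(\{\pi \in Run(t) \mid \pi \models \Phi_i\})$. Let $B$ collect states of $M$ that either lie in a BSCC or satisfy some $\varphi_i$ with $i \leq v$. Conditioning on the first visit to $B$ from $s$ yields $\alpha_s \leq \sum_{t \in B} y_t \cdot \alpha_t$ with $y_t \geq 0$ and $\sum y_t = 1$. Carath\'eodory's theorem applied to the convex hull of $\{\alpha_t : t \in B\}$ in $\R^v$ reduces this to a subset $\{t_1,\ldots,t_m\}$ with $m \leq u + 1 \leq |\sub(\psi)| + 1$ and convex coefficients $p_1,\ldots,p_m$; set the exit probabilities to $x_i = p_i \cdot \varrho$. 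For each $t_i$ define $\DL_{t_i}$ by replacing $r_j$ with the sharper bound $\alpha_{t_i}(j)$, form $X_{t_i} = UC_{t_i}(\DL_{t_i})$, apply the inductive hypothesis at $t_i$, and splice the returned bounded model in place of $t_i$.

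The correctness argument has two ingredients. For $\opg_{\op r}\chi \in L_i$, condition~(3) forces $\chi \in L_j$ for every $j$, so $\chi$ holds along the entire loop, while any $\opg$-formula in $\DL$ travels with its exact probability $\alpha_{t_i}$ into $\DL_{t_i}$; summing the successor contributions weighted by $p_i$ recovers a probability that dominates $\alpha_s$, hence meets the threshold in $X$. For $\opf_{\op r}\chi \in L_i$, either a witness $\chi$ appears at some $L_j$ of the loop, in which case the choice of $\varrho$ guarantees that from $\ell_i$ the loop reaches $\ell_j$ with probability at least $1-\varrho \geq r$ before any exit, or the formula lies in $\DL$ and is carried forward recursively. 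A routine product count — loops of size at most $2^a$, branching at most $a+1$ at each exit, and recursion depth bounded by a function of $a$ extracted from the measure of Section~\ref{sec-measure} — yields the size bound $a^{a^{a+5}}$; the $\twoEXPSPACE$ upper bound then follows by encoding the resulting doubly-exponential Markov chain symbolically in the first-order theory of the reals, as outlined in the footnote on page~1.

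The main obstacle will be the termination analysis of the recursion, namely showing that each $X_{t_i}$ is strictly smaller than $X$ under the measure of Section~\ref{sec-measure}. This is where conditions~(5) and~(6) of Definition~\ref{def-loop} do their work: condition~(5) strips from $\DL$ any $\opf_{\op r}\chi$ already satisfied at $s$, while condition~(6) enforces $\cf_s(\DL) \subseteq \cf_s(X)$, preventing the loop from introducing genuinely new pending $\opf$-obligations. Weaving these conditions together with the $\opg$-propagation rule to yield a strict decrease of the measure for \emph{every} $t_i$ delivered by Carath\'eodory — not just a cooperative one — is the technical crux on which the induction rests.
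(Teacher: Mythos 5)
Your overall architecture matches the paper's proof: well-founded induction on the progress measure of Section~\ref{sec-measure}, the loop gadget of Fig.~\ref{fig-loop} with exit probability calibrated above every non-unit $\opf$-threshold, Carath\'eodory to cut the successor set down to $u+1$ states, and recursive construction of sub-models for the sets $X_{t_i} = UC_{t_i}(\DL_{t_i})$. However, the proposal has a genuine gap exactly where you flag ``the technical crux'': you never establish that the measure strictly decreases, and — more importantly — you frame the needed claim incorrectly. You write that you must obtain a strict decrease ``for \emph{every} $t_i$ delivered by Carath\'eodory,'' but that claim is false as stated and the paper does not prove it. The set $B$ from which Carath\'eodory selects contains two kinds of states: those satisfying some $\varphi_j$ with $\opf\varphi_j$ pending in $\DL$, and those lying in a BSCC of $M$ (which need satisfy no $\varphi_j$ at all). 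For a BSCC state $t_i$ the measure $\|X_{t_i}\|_{t_i}$ need not be smaller than $\|X\|_s$ — e.g.\ $\Deg_{t_i}$ and $\cf_{t_i}$ can be nonempty inside a BSCC — so your uniform ``apply the inductive hypothesis at $t_i$'' does not go through and the recursion is not well-founded as you describe it. The paper resolves this with a case split: for BSCC states it bypasses the induction entirely, using the standard fact that in a BSCC every path formula holds with probability $0$ or $1$, which lets one quotient the BSCC down to at most $2^{|\sub(X_{t_i})|}$ states directly. For the remaining states, condition~(5) of Definition~\ref{def-loop} gives $s \not\models \varphi_j$ while $t_i \models \varphi_j$, and together with condition~(6) this is what forces $\cf_{t_i}(X_{t_i}) \subsetneq$-type shrinkage and hence $\|X_{t_i}\|_{t_i} < \|X\|_s$. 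Without this dichotomy your induction does not close.

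A secondary, smaller issue: the correctness argument for the loop states is only gestured at. For a formula $\opf_{\op r}\chi \in L_i$ whose witness sits in $L_j$ with $j < i$, you need the return edge $\ell_n \to \ell_0$ to carry probability \emph{at least} the calibrated threshold (the paper sets it to $\varepsilon > \max(R)$ and keeps only mass $1-\varepsilon$ for the exits, i.e.\ the loop is taken with \emph{high} probability, not left with high probability); your phrasing ``back-edge to $\ell_0$ of probability $1-\varrho$'' with $1-\varrho \geq r$ is consistent with this, but you should also verify the case $r = 1$ separately (witness must occur in $L_i \cup \cdots \cup L_n$, i.e.\ downstream, per the third clause defining $\DL$), and verify that formulae in $\DL$ are satisfied at \emph{every} $\ell_i$ by summing the exit contributions against the Carath\'eodory inequality $\alpha_s \leq \sum_i p_i \alpha_{t_i}$. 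These are routine once written down, but they are part of what the proof must contain.
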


\noindent
A full technical proof of Theorem~\ref{thm-main} formalizing the above sketch is given in Appendix~\ref{app-main-proof}. The $\twoEXPSPACE$ upper bound is obtained by encoding the bounded satisfiability into existential theory of the reals. For the sake of completeness, we recall this encoding in Appendix~\ref{app-encoding}.

Theorem~\ref{thm-main} can be applied to various PCTL fragments by demonstrating their progressivity, and can be interpreted as a ``unifying principle'' behind these concrete decidability results. To illustrate this, we give examples of progressive fragments in Section~\ref{sec-fragments}.

\subsection{Progress measure}
\label{sec-measure}

A crucial ingredient of our result is a function measuring the complexity of PCTL formulae. The value of this function, denoted by $\| \cdot \|_s$, is strictly decreased by every progress loop, i.e., 
$\|X_{t_i}\|_{t_i} < \|X\|_s$ for every $X_{t_i}$. Now we explain the definition of $\|X\|_s$. We start by introducing auxiliary notions that are also used in the full proof of Theorem~\ref{thm-main} (see Appendix~\ref{app-main-proof}).

Let $X$ be a set of PCTL state formulae. Recall that $\sub(X)$ denotes the set all state subformulae of all $\varphi \in X$.  The set $\psub(X)$ consists of all path formulae $\Phi$ such that $\sub(X)$ contains a state formula of the form $P(\Phi) \op r$. 

Let $\Phi$ be a path formula of the form $\opf \varphi$ or $\opg \varphi$. We put
\[
    \| \Phi \| ~~=~~ 1 + \sum_{\Psi \in \psub(\varphi)} \| \Psi \|
\]
where the empty sum denotes~$0$. Note that this definition is correct because the nesting depth of $\opf$ and $\opg$ is finite in every path formula.


Now let $X$ be a finite set of PCTL formulae and $s$ a state of a finite Markov chain. Let
\begin{itemize}
  \item $\Deg_s(X)$ be the set of all $\opg \varphi \in \psub(X)$ such that $s \not\models \opg_{=1}\varphi$;
  \item $\cf_s(X)$ be the set of all formulae $\opf \varphi$ such that $X$ contains a formula of the form $\opf_{\op r} \varphi$, $s \not\models \varphi$, and there is a finite path from $s$ to a state~$t$ where $t \models \varphi$ and $t \not\models \opg_{=1}\psi$ for every $\opg \psi \in \Deg_s(X)$ (this is the same definition as in condition~(6) of Definition~\ref{def-loop}, it is recalled for the sake of readability).
\end{itemize}   

\begin{definition}[Progress measure $\| \cdot \|_s$]
\label{def-measure}
Let $X$ be a finite set of PCTL formulae and $s$ a state in a finite Markov chain. We put
\begin{equation*}
        \| X \|_s ~~=~~ 1+ \vert \Deg_s(X) \vert 
                    \cdot \Big( 1 + \sum_{\Phi \in \psub(X)} \| \Phi \| \Big)
                    + \sum_{\Phi \in \cf_s(X)} \| \Phi \|
\end{equation*}
\end{definition}

The progress measure of Definition~\ref{def-measure} appears technical, but it faithfully captures the simplification achieved by a progress loop.

\begin{example}
Let $X = UC_s(\psi)$ for the $\psi$ and $s$ of our running example, i.e.,
  \[
   X \quad = \quad \{\psi,\quad \opg_{=1} \big(\opf_{\geq 0.5}(a \wedge \opf_{\geq 0.2} \neg a) \vee a\big),\quad \opf_{= 1}\, \opg_{=1}\,a ,\quad  \neg a\}
  \]
We have that 
\begin{itemize}
  \item $\Deg_s(X) = \{\opg a\}$,
  \item $\psub(X) = \{ \opg \big(\opf_{\geq 0.5}(a \wedge \opf_{\geq 0.2} \neg a)\vee a\big) ,\ \opf\, \opg_{=1}\,a\}$,
  \item $\cf_s(X) = \emptyset$.
\end{itemize}
Since $\| \opg \big(\opf_{\geq 0.5}(a \wedge \opf_{\geq 0.2} \neg a)\vee a\big)\| = 3$ and 
$\| \opf\, \opg_{=1}\,a \| = 2$, we obtain \mbox{$\| X\|_s = 7$}.
\end{example}

%
%
%


\subsection{Progressive PCTL fragments}
\label{sec-fragments} 

In this section, we give examples of several progressive PCTL fragments. The constraint $\op r$  has the same meaning as in Definition~\ref{def-pctl}, and $\op w$ stands for an arbitrary constraint except for `${=}1$'. 


\begin{description}
\renewcommand{\mid}{~~|~~}
\item[Fragment $\calL_1$]
\[
  \begin{array}{lcl}
     \varphi & ~~::=~~ & a \mid \neg a \mid \varphi_1 \wedge \varphi_2 \mid \varphi_1 \vee \varphi_2 \mid
                     \opf_{\op r} \varphi \mid \opg_{\op r} \psi\\
     \psi    & ::= & a \mid \neg a \mid \psi_1 \wedge \psi_2 \mid \psi_1 \vee \psi_2 \mid
                     \opg_{\op r} \psi
  \end{array}
\]
\item[Fragment $\calL_2$]
\[
  \begin{array}{lcl}
     \varphi & ~~::=~~ & a \mid \neg a \mid \varphi_1 \wedge \varphi_2 \mid \varphi_1 \vee \varphi_2 \mid
                     \opf_{\op r} \varphi \mid \opg_{=1} \psi\\
     \psi    & ::= & a \mid \neg a \mid \psi_1 \wedge \psi_2 \mid \psi_1 \vee \psi_2 \mid
                     \opf_{\op w} \psi
  \end{array}
\]
\item[Fragment $\calL_3$]
\[
  \begin{array}{lcl}
     \varphi & ~~::=~~ & a \mid \neg a \mid \varphi_1 \wedge \varphi_2 \mid \varphi_1 \vee \varphi_2 \mid
                     \opf_{\op r} \varphi \mid \opg_{=1} \psi \mid \opg_{=1} \varrho\\
     \psi    & ::= & a \mid \neg a \mid \psi_1 \wedge \psi_2 \mid \psi_1 \vee \psi_2 \mid
                     \opf_{\op w} \psi\\
     \varrho & ::= & \varrho_1 \wedge \varrho_2 \mid \varrho_1 \vee \varrho_2 \mid
                     \opf_{\op w} \psi \mid \opg_{=1} \psi \mid \opg_{=1} \varrho
  \end{array}
\]
\item[Fragment $\calL_4$]
\[
  \begin{array}{lcl}
     \varphi & ~~::=~~ & a \mid \neg a \mid \varphi_1 \wedge \varphi_2 \mid \varphi_1 \vee \varphi_2 \mid
                     \opf_{\op r} \varphi \mid \opg_{= 1} \psi\\
     \psi    & ::= & a \mid \neg a \mid \psi_1 \wedge \psi_2 \mid \psi_1 \vee \psi_2 \mid
                     \opf_{> 0} \psi \mid \opg_{=1} \psi
  \end{array}
\]
\end{description}
Observe that $\calL_2$ and $\calL_3$ contain the formula $\psi$ of our running example. The above fragments are chosen so that they are not covered by the results of \cite{KR:PCTL-unbounded} and illustrate various properties of Definition~\ref{def-loop}. Fragments $\calL_2$, $\calL_3$, and $\calL_4$ contain formulae requiring non-bottom SCCs with more than one state. 

To demonstrate the applicability of Theorem~\ref{thm-main}, we explictly show that $\calL_2$ is progressive.

\begin{proposition}
  Fragment $\calL_2$ is progressive.
\end{proposition}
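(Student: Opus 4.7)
Fix $X \subseteq \calL_2$ with $s \models X$ and $X = UC_s(X)$ in a finite Markov chain~$M$. The plan is to construct a progress loop $L_0, \ldots, L_n$ iteratively and then verify conditions~(1)--(6) of Definition~\ref{def-loop}.

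First, I would set $L_0 := C_s(X \cup G)$, where $G := \{\psi : \opg_{=1}\psi \in C_s(X)\}$. Because the $\psi$-fragment of $\calL_2$ is $\opg$-free, each $\psi \in G$ is a Boolean combination of literals and $\opf_{\op w}$-formulas, and $s \models \opg_{=1}\psi$ implies $s \models \psi$, so $L_0 \subseteq \sub(X)$ is consistent and satisfied at~$s$; moreover adding $G$ to the closure introduces no new $\opg$-formulas. Then I would extend the loop: for every $\opf_{\op r}\varphi$ appearing in the current loop with $\varphi$ not yet in any $L_k$, apply \emph{safe deferral} (leaving $\opf_{\op r}\varphi$ in~$\DL$) iff $\opf_{\op r}\varphi$ is a top-level formula of~$X$ and $\opf\varphi \in \cf_s(X)$; otherwise pick any reachable $t \in M$ with $t \models \varphi$ (such $t$ exists by positive reach-probability from the loop state hosting $\opf_{\op r}\varphi$) and append a new $L_{j+1}$ equal to the minimal subset of $\sub(X)$ containing $\varphi$ and~$G$ that satisfies the closure rules of condition~(3), with the $t$-satisfied disjunct chosen whenever a disjunction is encountered. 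Since the $L_i$'s are pairwise distinct subsets of $\sub(X)$, the iteration terminates within $2^{|\sub(X)|}$ steps.

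Conditions~(1)--(5) should then be routine: $X \subseteq L_0$ yields~(1); pairwise distinctness yields~(2); building each $L_j$ as a condition-(3)-closure at a reachable state satisfying~$G$ yields~(3); $\DL$ contains only $\opg$-formulas of $L_0$ (satisfied at~$s$ via $C_s$) and safely-deferred top-level $\opf$-formulas of~$X$, giving~(4); the contrapositive of the $\opf$-closure rule gives~(5).

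The main obstacle is condition~(6). By the safe-deferral criterion, every $\opf\varphi \in \cf_s(\DL)$ corresponds to $\opf_{\op r}\varphi \in X$ with an explicit $\cf_s(X)$-witness, so $\opf\varphi \in \cf_s(X)$. The subtle point I expect to be the hardest to verify is that the extension step never forces an $\opg_{=1}\psi''$ into a new $L_{j+1}$ whose body $\psi''$ is not already required to hold at~$s$---otherwise condition~(3)'s propagation would push $\psi''$ into $L_0$ and break its consistency. I would handle this with the $\calL_2$-specific observation that $\opg$-bodies are $\opg$-free: if $\varphi$'s structure forces every witness $t \models \varphi$ to satisfy some $\opg_{=1}\psi''$ with $s \not\models \opg_{=1}\psi''$, then every reachable $\varphi$-state satisfies $\opg_{=1}\psi''$, so $\opf\varphi \notin \cf_s(X)$; but the same $\opg\psi''$ lies in $\Deg_s(\DL)$, giving $\opf\varphi \notin \cf_s(\DL)$ as well---the formula drops out of both sides of~(6), deferral is vacuously safe, and no extension is needed.
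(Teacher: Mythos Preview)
Your high-level plan—close $X$ together with the $\opg_{=1}$-bodies to form $L_0$, then extend by witnesses for unresolved $\opf$-formulas—matches the paper. The gap is in your deferral criterion. You defer $\opf_{\op r}\varphi$ iff it lies in $X$ \emph{and} $\opf\varphi \in \cf_s(X)$; otherwise you extend. But extending for a $\varphi$-level formula of $X$ can drag a fresh $\opg_{=1}\psi''$ into $L_{j+1}$. Concretely, take $X=\{\opf_{\geq r'}(\opg_{=1}b)\}$ with $s\not\models\opg_{=1}b$: then $\opf(\opg_{=1}b)\notin\cf_s(X)$ (every witness $t$ satisfies $\opg_{=1}b$, and $\opg b\in\Deg_s(X)$), so your rule says extend; $\opg_{=1}b$ enters $L_1$, and condition~(3) forces $b$ into $L_0$, where nothing guarantees consistency. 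Your remedy (``deferral is vacuously safe, and no extension is needed'') contradicts the criterion you laid down two paragraphs earlier: you extended precisely because $\opf\varphi\notin\cf_s(X)$, and now you defer in exactly that case. Even if you patch the criterion post hoc, you still owe an argument that \emph{some} witness can always be chosen whose closure avoids all fresh $\opg$'s; your text only treats the case where \emph{every} witness forces the same $\opg_{=1}\psi''$.

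The paper sidesteps the whole issue with a simpler rule: defer \emph{every} $\opf_{\op r}\xi\in X$ and extend only for $\opf_{\op r}\xi\notin X$. The key $\calL_2$-specific observation is sharper than ``$\opg$-bodies are $\opg$-free'': since $X=C_s(X)$ already absorbs the $\wedge/\vee/\opf$-closure rules, any $\opf_{\op r}\xi\in L_0\setminus X$ can only have arisen from expanding a $\opg_{=1}$-body, so $\xi$ is a $\psi$-formula---hence $\opg$-free and $\opf_{=1}$-free. Each subsequent $L_{j+1}$ is built from one such $\xi$ plus the fixed set $N$ of $\opg$-bodies (again $\psi$-formulas), so $L_1,\ldots,L_n$ lie entirely in the $\psi$-layer. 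No extension step can ever introduce a $\opg$-formula, your ``subtle point'' evaporates, and the only $\opf$-formulas remaining in $\DL$ are those already in $X$; the paper then asserts that conditions~(1)--(6) follow.
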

\begin{proof}
Let $X \subseteq \calL_2$ be a finite set of formulae and $s$ a state of a finite Markov chain such that $s \models X$ and $X = UC_s(X)$. We show that there exists a progress loop for $X$ and $s$. To achieve that,  
we inductively construct a finite sequence $L_0,\ldots,L_n$, where every $L_i$ is associated to some state $t_i$ reachable from $s$ such that $t_i \models L_i$ . The set $L_0$ is the least set $M$ satisfying the following conditions:
\begin{itemize}
  \item $X \subseteq M$;
  \item if $\varphi_1 \wedge \varphi_2 \in M$, then $\varphi_1,\varphi_2 \in M$;
  \item if $\varphi_1 \vee \varphi_2 \in M$ and $s \models \varphi_1$, then $\varphi_1 \in M$;
  \item if $\varphi_1 \vee \varphi_2 \in M$ and $s \models \varphi_2$, then $\varphi_2 \in M$;
  \item if $\opg_{\op r} \varphi \in M$, then $\varphi \in M$.
  \item if $\opf_{\op r} \varphi \in M$ and $s \models \varphi$, then $\varphi \in M$.
\end{itemize}
We put $t_0 = s$ (observe $s \models L_0$). Furthermore, let $N$ be the set of all formulae $\xi$ such that $\opg_{=1} \xi \in L_0$. 

Suppose that $L_0,\ldots,L_n$ are the sets constructed so far where $t_i \models L_i$ for every $i \in \{0,\ldots,n\}$.  Now we distinguish two possibilities. 
\begin{itemize}
\item If for every formula of the form $\opf_{\op r} \xi \in L_0 \cup \ldots \cup L_n$ where $\opf_{\op r} \varphi \not\in X$ there exists $i \in \{0,\ldots,n\}$ such that $\xi \in L_i$, then the construction terminates.
\item Otherwise, let $\opf_{\op r} \xi \in L_i$ be a formula such that $\opf_{\op r} \xi \not\in X$ and \mbox{$\xi \not\in L_0 \cup \ldots \cup L_n$}. It follows from the definition of the fragment $\calL_2$ that $r \neq 1$. Furthermore, $t_i \not\models \xi$ (this is guaranteed by the closure rules defining $L_0$ and $L_{n+1}$, see below). Since $t_i \models  \opf_{\op r} \xi$, there exists a state $t$ reachable from $t_i$ (and hence also from $s$) such that $t \models \xi$. Furthermore, $t \models N$. Now, we construct $L_{n+1}$, which is the least set $M$ satisfying the following conditions:
\begin{itemize}
  \item $\xi \in M$;
  \item $N \subseteq M$;
  \item if $\varphi_1 \wedge \varphi_2 \in M$, then $\varphi_1,\varphi_2 \in M$;
  \item if $\varphi_1 \vee \varphi_2 \in M$ and $t \models \varphi_1$, then $\varphi_1 \in M$;
  \item if $\varphi_1 \vee \varphi_2 \in M$ and $t \models \varphi_2$, then $\varphi_2 \in M$;
  \item if $\opf_{\op r} \varphi \in M$ and $t \models \varphi$, then $\varphi \in M$.
\end{itemize}
Observe that if $\opg_{=1} \varphi \in M$, then $\opg_{\op r} \varphi \in N$ because $\xi$ does not contain any subformula of the form $\opg_{=1} \varphi$ (see the definition of $\calL_2$). Furthermore, $t \models L_{n+1}$.
\end{itemize}
Note that if  $\opf_{\op r} \varphi \in \DL$, then this formula belongs also to $X$. Now it is easy to see that the constructed $L_0,\ldots,L_n$ is a progress loop. 
\qed
\end{proof}

Let us note that arguments justifying the progressiveness of $\calL_1$ are simple, arguments for $\calL_3$ are obtained by extending the ones for $\calL_2$, and arguments for $\calL_4$ already involve the technical condition~(6) in Definition~\ref{def-loop}.

\section{Conclusions}
\label{sec-concl}

We have shown that the finite satisfiability problem is decidable in doubly exponential space for all PCTL fragments where a progress loop is guaranteed to exist. A natural continuation of our work is to generalize the shape of a progress SCC and the associated progress measure. Attractive candidates are loops with several exit states, and SCCs with arbitrary topology but one exit state. Here, increasing the probability of satisfying $\opf \varphi$ subformulae can be ``traded'' for decreasing the probability of satisfying $\opg \varphi$ formulae, and understanding this phenomenon is another important step towards solving the finite satisfiability problem for the whole PCTL.

Let us note that the technique introduced in this paper can also be used to tackle the decidability of \emph{general} satisfiability for PCTL fragments including formulae that are not finitely satisfiable. By unfolding progress loops into infinite-state Markov chains and arranging the probabilities appropriately, formulae of the form $\opg \varphi$ can be satisfied with arbitrarily large probability by the progress loop 
itself, although the loop is still exited with positive probability. Elaborating this idea is another interesting challenge for future work.

\subsection*{Acknowledgement}
The work is supported by the Czech Science Foundation, Grant No.{}~21-24711S.
%
%
%
%

\bibliographystyle{splncs04}
\bibliography{str-long,concur}

\newpage
\appendix
\section{A proof of Theorem~\ref{thm-main}}
\label{app-main-proof}

Let $X$ be a finite set of PCTL formulae. Let $p(X)$ be the set of path formulae $\Phi$ such that $X$ contains a formula of the form $P(\Phi) \op r$. We use $\nsub(X)$ to denote the subset of $\sub(X)$ consisting of all formulae that are \emph{not} of the form $\opf_{\op r} \xi$ or $\opg_{\op r} \xi$. Furthermore, we put 

\begin{equation*}
   b(X) ~~=~~ 2 + \vert \nsub(X) \vert + \vert \psub(X) \vert + \bigg\vert \bigcup_{\vp \in X} \sub(\vp) \setminus \{\vp\} \bigg\vert.
\end{equation*}  

For a given path formula $\Phi$ and a state $s$, we use $\m(s\models \Phi)$ to denote the probability of all runs initiated in $s$ satisfying $\Phi$. We also use  $\theta_s(X)$ to denote the set of all formulae of the form $P(\Phi) \geq \m(s \models \Phi)$ such that $X$ contains a formula of the form $P(\Phi) \op r$
and $\m(s \models \Phi)>0$.

We start by a sequence of auxiliary observations. The next two lemmas follow directly from definitions.
\begin{lemma}\label{size_norm_decreases}
    Let $\mc$ be a Markov chain, $s,t \in S$ states, and $X$ a
    finite set of formulae such that the following holds:
    \begin{enumerate}
        \item $s \not \models \psi$ for every $\vp \in X$ such
            that $\vp$ is of form $P(\opf \psi) \op r$,
        \item $t \models \psi$ for some $\vp \in X$ such
            that $\vp$ is of form $P(\opf \psi) \op r$,
        \item there is a path from $s$ to $t$.
    \end{enumerate}
    Then $\| UC_t \circ \theta_t (X) \|_t < \| X \|_s$.
\end{lemma}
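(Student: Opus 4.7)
The plan is to set $Y = UC_t \circ \theta_t(X)$ and compare $\|Y\|_t$ to $\|X\|_s$ term by term using Definition~\ref{def-measure}. First I would establish two structural inclusions obtained purely from the existence of the path $s \to t$: namely $\psub(Y) \subseteq \psub(X)$ (because $\theta_t$, $C_t$, and $U_t$ only re-annotate or extract existing state subformulae of $X$), and $\Deg_t(Y) \subseteq \Deg_s(X)$. The latter uses that if $\opg\varphi \in \Deg_t(Y)$ then there is positive probability from $t$ of visiting a $\neg\varphi$-state, which via the path lifts to positive probability from $s$, giving $s \not\models \opg_{=1}\varphi$.

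The argument then splits on whether $\Deg_t(Y) \subsetneq \Deg_s(X)$. If this inclusion is strict, the coefficient contribution to $\|X\|_s - \|Y\|_t$ is at least $1 + \sum_{\Phi \in \psub(X)}\|\Phi\|$, which dominates the potential increase in the $\cf$-sum bounded by $\sum_{\Phi \in \cf_t(Y)}\|\Phi\| \leq \sum_{\Phi \in \psub(Y)}\|\Phi\| \leq \sum_{\Phi \in \psub(X)}\|\Phi\|$, yielding $\|Y\|_t \leq \|X\|_s - 1$.

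In the equality case $\Deg_t(Y) = \Deg_s(X)$, every $\opg\varphi \in \Deg_s(X)$ satisfies $t \not\models \opg_{=1}\varphi$, so the path $s \to t$ combined with hypotheses~(1) and~(2) witnesses $\opf\psi \in \cf_s(X)$ for the $\psi$ from hypothesis~(2), while $\opf\psi \notin \cf_t(Y)$ because $t \models \psi$. To control $\sum_{\Phi \in \cf_t(Y)}\|\Phi\|$, I observe that each formula $\opf\varphi \in \cf_t(Y) \setminus \cf_s(X)$ has $\opf_{\op r}\varphi$ at top level of $Y$ only because $C_t$ extracted it from a chain originating at some top-level $\opf_{\op r_0}\xi_0 \in X$ with $t \models \xi_0$; by hypothesis~(1) and the $\Deg$-equality, $\opf\xi_0 \in \cf_s(X) \setminus \cf_t(Y)$, and moreover $\opf\varphi \in \psub(\xi_0)$. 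The recursive identity $\|\opf\xi_0\| = 1 + \sum_{\Psi \in \psub(\xi_0)}\|\Psi\|$ then bounds the total weight of new formulae per ancestor by $\|\opf\xi_0\| - 1$, and attributing each new formula to a single ancestor gives a net decrease in $\sum_{\Phi \in \cf_s(X)}\|\Phi\| - \sum_{\Phi \in \cf_t(Y)}\|\Phi\|$ of at least the number of ancestors, which is at least one (since $\psi$ itself is an ancestor), so $\|Y\|_t < \|X\|_s$. The delicate step is the attribution: charging each new top-level $\opf$-formula of $Y$ to a \emph{distinct} ancestor in $\cf_s(X)$ without double-counting across overlapping closure chains, which I expect to handle by induction on the closure derivation together with an arbitrary tie-breaking choice of ancestor.
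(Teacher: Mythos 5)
Your proposal is correct, and it is worth noting that the paper itself offers no argument here: Lemma~\ref{size_norm_decreases} is dismissed with ``follows directly from definitions,'' so yours is effectively the proof the paper omits. The two structural inclusions you start from ($\psub(UC_t\circ\theta_t(X))\subseteq\psub(X)$ because $\theta_t$, $C_t$, $U_t$ only select, extract, or re-annotate subformulae of $X$, and $\Deg_t(\cdot)\subseteq\Deg_s(X)$ by lifting a positive-probability violation of $\opg\varphi$ from $t$ back to $s$ along the path) are exactly what is needed, and the case split on whether the $\Deg$-inclusion is strict is the right organizing principle: in the strict case the coefficient drop of $1+\sum_{\Phi\in\psub(X)}\|\Phi\|$ indeed absorbs any growth of the $\cf$-sum, and in the equality case the fact that every $\opg\chi\in\Deg_s(X)$ has $t\not\models\opg_{=1}\chi$ is precisely what makes $t$ a legitimate $\cf_s(X)$-witness for $\opf\psi$ and for every ancestor $\opf\xi_0$ with $t\models\xi_0$. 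The attribution step you flag as delicate is in fact unproblematic: since $C_t$ has no extraction rule for $\opg$, every formula of $\cf_t(Y)\setminus\cf_s(X)$ lies in $\psub(\xi_0)$ for some ancestor $\opf\xi_0\in\cf_s(X)\setminus\cf_t(Y)$, and after an arbitrary choice of one ancestor per new formula, the charged formulae form a subset of that ancestor's $\psub$, so the identity $\|\opf\xi_0\|=1+\sum_{\Psi\in\psub(\xi_0)}\|\Psi\|$ gives the per-ancestor budget $\|\opf\xi_0\|-1$ with no double counting. The only imprecision is the parenthetical ``since $\psi$ itself is an ancestor'': $\opf\psi$ need not be an ancestor of any new formula, so the number of ancestors could be zero; but then the new set is empty and strictness follows anyway from $\opf\psi\in\cf_s(X)\setminus\cf_t(Y)$ contributing $\|\opf\psi\|\geq 1$ to the dropped side. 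With that phrase repaired, the argument is complete.
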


\begin{lemma}\label{loop_dl_leq_x}
    Let $\fullmc$ be a Markov chain, $s \in S$ a state, and $X$ a
    finite set of formulae, and $\mathcal{L}$
    a progress loop for $M,s,X$. Then $\| \dl \|_s \leq \| X \|_s$.
\end{lemma}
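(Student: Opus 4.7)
\noindent
The plan is to show that the progress measure $\|\cdot\|_s$ is monotone under set inclusion in each of its three ingredients, and then to verify that all three relevant inclusions hold when passing from $\dl$ to $X$. With these inclusions in hand, the inequality will fall out of Definition~\ref{def-measure} by a single calculation.

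First I would record the chain $\dl \subseteq L_0 \cup \cdots \cup L_n \subseteq \sub(X)$. The left-hand inclusion is immediate from the definition of $\dl$ as a subset of the union of the $L_i$, while the right-hand inclusion is part of Definition~\ref{def-loop}, which requires every $L_i$ to be a subset of $\sub(X)$. Since state subformulae of state subformulae are again state subformulae, this gives $\sub(\dl) \subseteq \sub(X)$, and therefore $\psub(\dl) \subseteq \psub(X)$. From the latter inclusion, $\Deg_s(\dl) \subseteq \Deg_s(X)$ follows at once, because $\Deg_s(\cdot)$ is obtained from $\psub(\cdot)$ by filtering through the condition $s \not\models \opg_{=1}\varphi$, which does not depend on whether we start from $\dl$ or from $X$. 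The third inclusion, $\cf_s(\dl) \subseteq \cf_s(X)$, is exactly condition~(6) of Definition~\ref{def-loop}; this is precisely why that condition appears in the definition of a progress loop.

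Finally I would plug the three inclusions into Definition~\ref{def-measure}. Both $|\Deg_s(\cdot)|$ and $1 + \sum_{\Phi \in \psub(\cdot)} \|\Phi\|$ are non-negative and non-decreasing in the underlying set, so their product is dominated when passing from $\dl$ to $X$; similarly $\sum_{\Phi \in \cf_s(\cdot)}\|\Phi\|$ is non-decreasing because each $\|\Phi\| \geq 1$. Together with the common leading $+1$ term, this yields $\|\dl\|_s \leq \|X\|_s$. I do not foresee any genuine obstacle: the whole argument is a one-line monotonicity calculation, and the only delicate point -- namely that the $\opf$-obligations inside $\dl$ do not generate new entries of $\cf_s$ absent from $\cf_s(X)$ -- is exactly what condition~(6) was designed to rule out. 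Consequently the work is entirely in the definitions, with no estimates on measures or paths beyond what is already baked into Definition~\ref{def-loop}.
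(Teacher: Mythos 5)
Your proof is correct and matches the paper's approach: the paper states that this lemma ``follows directly from definitions,'' and your argument is exactly the routine verification it omits --- the inclusions $\Deg_s(\DL)\subseteq\Deg_s(X)$ and $\psub(\DL)\subseteq\psub(X)$ from $\DL\subseteq\sub(X)$, condition~(6) for $\cf_s$, and monotonicity of the measure. You also correctly identify the one non-automatic step (that $\cf_s(\DL)\subseteq\cf_s(X)$ must be assumed rather than derived), which is precisely the role of condition~(6).
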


\begin{lemma}\label{size_b_subx}
    Let $\mc$ be a Markov chain, $s \in S$ a state, and $X$ a
    finite set of formulae such that $X = U_s(X)$.
    Then $\vert sub(X) \vert + 1 \leq b(X)$.
\end{lemma}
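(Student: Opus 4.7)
The plan is a direct counting argument. First I would partition $\sub(X)$ into the set $A$ of state subformulae of the form $P(\Phi) \op r$ and its complement $B = \sub(X) \setminus A$. By the definition of $\nsub$, every formula in $A$ has the shape $\opf_{\op r} \xi$ or $\opg_{\op r} \xi$, so $B = \nsub(X)$, giving $|B| = |\nsub(X)|$.

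Next I would split $A$ as $(A \cap X) \cup (A \setminus X)$. Writing $Y = \bigcup_{\varphi \in X}(\sub(\varphi) \setminus \{\varphi\})$ for the last term in the definition of $b(X)$, any $\chi \in A \setminus X$ is a subformula of some $\varphi \in X$ but $\chi \notin X$, so $\chi \neq \varphi$; hence $\chi$ is a strict subformula of $\varphi$ and therefore lies in $Y$, giving $|A \setminus X| \leq |Y|$.

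The main step, and the only place the hypothesis $X = U_s(X)$ is used, is to bound $|A \cap X|$ by $|\psub(X)|$. The key point is that $U_s$ collapses every formula $P(\Phi) \op r$ in $X$ to the single formula $P(\Phi) \geq \m_s(\{\pi \in \run(s) \mid \pi \models \Phi\})$, a value that depends only on $\Phi$ and $s$. Since $U_s(X) = X$, for each path formula $\Phi$ at most one state formula of the form $P(\Phi) \op r$ lies in $X$, so the map sending such a formula to its path component is an injection from $A \cap X$ into $\psub(X)$.

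Summing the three bounds yields $|\sub(X)| = |B| + |A \cap X| + |A \setminus X| \leq |\nsub(X)| + |\psub(X)| + |Y| = b(X) - 2$, so $|\sub(X)| + 1 \leq b(X)$ with even a unit of slack. I do not anticipate any real obstacle: the sole non-routine observation is that $X = U_s(X)$ enforces uniqueness of the probability constraint attached to each top-level path formula of $X$, and everything else is a straightforward set-theoretic decomposition.
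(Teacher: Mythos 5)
Your proposal is correct and takes essentially the same route as the paper: both arguments reduce $|\sub(X)|$ to the three terms $|\nsub(X)|$, $|\psub(X)|$, and $|\bigcup_{\varphi \in X}\sub(\varphi)\setminus\{\varphi\}|$, and both hinge on the single non-routine observation that $X = U_s(X)$ forces at most one formula $P(\Phi)\op r$ per path formula $\Phi$ in $X$, yielding an injection into $\psub(X)$. The paper packages this as an injection $\mu\colon X \to \nsub(X)\cup\psub(X)$ after writing $\sub(X) = X \cup \bigcup_{\varphi\in X}\sub(\varphi)\setminus\{\varphi\}$, whereas you partition $\sub(X)$ directly, but this is only a cosmetic difference.
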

\begin{proof}
    Clearly, $sub(X) = X \cup \bigcup_{\vp \in X} sub(\vp) \setminus \{\vp\}$.
    Hence, $\vert sub(X) \vert \leq \vert X \vert + \vert \bigcup_{\vp \in X} sub(\vp) \setminus \{\vp\} \vert$.
    For each $\vp \in X$, define $\mu(\vp)$ in the following way:
    \begin{equation*}
        \mu(\vp) = \begin{cases}
                    \Phi & \text{ if } \vp \text{ is of form } P(\Phi) \oplus r, \\
                    \vp  & \text{ otherwise.}
                   \end{cases}
    \end{equation*}
    Observe that $\mu(\vp) \in nsub(X) \cup psub(X)$ for every $\vp \in X$ and $\mu$ can be seen as function
    $\mu \colon X \rightarrow nsub(X) \cup psub(X)$.
    Injectivity of $\mu$ follows from $X = U_s(X)$.
    Hence,  $\vert X \vert \leq \vert nsub(X) \cup psub(X) \vert$.
    It follows that $\vert X \vert \leq \vert nsub(X) \vert + \vert psub(X) \vert$.
    We obtain $\vert sub(X) \vert \leq \vert X \vert + \vert \bigcup_{\vp \in X} sub(\vp) \setminus \{\vp\} \vert$ and also  $\vert X \vert \leq \vert nsub(X) \vert + \vert psub(X) \vert$.
    From this, $\vert sub(X) \vert + 1 \leq b(X)$ follows immediately.
\qed
\end{proof}

A proof of the next lemma follows by a simple calculation.
\begin{lemma}\label{size_b_leq}
    Let $X_1,X_2$ be a finite sets of formulae such that $b(X_1)
    \leq b(X_2)$ and $n_1,n_2 \in \nat$ such that
    $0 < n_1 \leq n_2$. Then
    \begin{equation*}
        2^{b(X_1)}
        \leq 2^{b(X_1)} \cdot \frac{b(X_1)^{n_1}-1}{b(X_1)-1}
        \leq 2^{b(X_2)} \cdot \frac{b(X_2)^{n_2}-1}{b(X_2)-1}.
    \end{equation*}
\end{lemma}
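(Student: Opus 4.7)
\smallskip
\noindent\textbf{Proof plan.} The statement is a purely arithmetic monotonicity fact, and the whole proof reduces to showing that the geometric sum $\sum_{i=0}^{n-1} x^i$ is (weakly) monotone in both $n$ and in $x$, once $x \geq 1$, combined with monotonicity of $2^x$. My plan is first to pin down the range of the quantity $b(X)$ so that the denominator $b(X)-1$ is safely positive and the closed-form geometric series manipulation is legal; then to handle the two inequalities separately by short direct calculations.

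First I would observe from the very definition
\[
  b(X) \;=\; 2 + |\nsub(X)| + |\psub(X)| + \Big|\bigcup_{\varphi \in X} \sub(\varphi)\setminus\{\varphi\}\Big|
\]
that $b(X_1), b(X_2) \geq 2$, so in particular $b(X_1)-1 \geq 1 > 0$, and we may freely rewrite
\[
  \frac{b(X_1)^{n_1}-1}{b(X_1)-1} \;=\; \sum_{i=0}^{n_1-1} b(X_1)^i,
  \qquad
  \frac{b(X_2)^{n_2}-1}{b(X_2)-1} \;=\; \sum_{i=0}^{n_2-1} b(X_2)^i.
\]
For the left-hand inequality, the sum on the left has $n_1 \geq 1$ terms, each at least $1$ (since $b(X_1)\geq 2$ and $i\geq 0$), so the sum is at least $1$. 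Multiplying through by the positive factor $2^{b(X_1)}$ yields $2^{b(X_1)} \leq 2^{b(X_1)}\cdot \sum_{i=0}^{n_1-1} b(X_1)^i$, which is the first inequality.

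For the right-hand inequality, I would factor the gap as the product of two monotonicity steps. Since $b(X_1)\leq b(X_2)$ and the function $x \mapsto 2^x$ is monotone nondecreasing, $2^{b(X_1)} \leq 2^{b(X_2)}$. Moreover each term $b(X_1)^i \leq b(X_2)^i$ for every $i\geq 0$, so $\sum_{i=0}^{n_1-1} b(X_1)^i \leq \sum_{i=0}^{n_1-1} b(X_2)^i$. Since $n_1\leq n_2$, extending the summation by nonnegative terms gives $\sum_{i=0}^{n_1-1} b(X_2)^i \leq \sum_{i=0}^{n_2-1} b(X_2)^i$. Chaining these three monotonicity steps (all factors are nonnegative, so inequalities multiply) yields exactly
\[
  2^{b(X_1)}\cdot \sum_{i=0}^{n_1-1} b(X_1)^i \;\leq\; 2^{b(X_2)}\cdot \sum_{i=0}^{n_2-1} b(X_2)^i,
\]
which after re-expressing the sums as the stated closed forms is the second inequality.

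The only subtle point is ensuring $b(X)-1 > 0$ so the closed-form identity is valid and no degenerate $0/0$ occurs; the lower bound $b(X)\geq 2$ built into the definition settles this. No deep combinatorics or probabilistic reasoning is needed, which is why the paper dismisses this as a routine calculation; the plan above is essentially just making the two monotonicity observations explicit and chaining them.
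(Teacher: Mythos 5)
Your proof is correct and matches the paper's intent: the paper omits the argument entirely, stating only that the lemma ``follows by a simple calculation,'' and your explicit geometric-series rewriting plus the three monotonicity steps is precisely that routine calculation, including the needed observation that $b(X) \geq 2$ makes the denominator positive.
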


Recall that a run initiated in a BSCC of a finite Markov chain visits all states of this BSCC infinitely often with probability~$1$. Hence, all path formulae are satisfied with probability either $0$ or~$1$. This allows to partition the states of the BSCC according to the satisfaction of path formulae and thus reduce the number of states. This (standard) observation is recalled in the next lemma.
\begin{lemma}\label{prob_bounded_bscc}
    Let $\mc$ be a finite Markov chain, $s \in S$ a state, and
    $X$ a finite set of formulae. Suppose $s \in B$ for some BSCC
    $B \subseteq S$ and $s \models \vp$ for every $\vp \in X$.
    Then there exists a Markov chain $(S',P',v')$ and $s' \in S'$
    such that $\vert S' \vert \leq 2^{|sub(X)|} $ and
    $s' \models \vp$ for every $\vp \in X$.
\end{lemma}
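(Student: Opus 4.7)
The plan is to exploit the standard BSCC fact recalled just before the lemma: since every state of $B$ is visited infinitely often with probability one from every state of $B$, path formulae are satisfied with probability in $\{0,1\}$, and the value is independent of the starting state inside $B$. Concretely, I would first verify that for every $\opf \varphi \in \psub(X)$ and every $t \in B$, $\m(t \models \opf \varphi) = 1$ iff some $u \in B$ satisfies $\varphi$ (and $0$ otherwise), and dually $\m(t \models \opg \varphi) = 1$ iff every $u \in B$ satisfies $\varphi$. Combined with the convention that `$\geq 0$' and `$>1$' are disallowed as constraints, this yields that the truth of every probabilistic subformula of $\sub(X)$ at $t \in B$ is uniform across $B$; hence the ``type'' $T(t) = \{\psi \in \sub(X) \mid t \models \psi\}$ is determined by $v(t) \cap \AP_X$, where $\AP_X$ is the (finite) set of atomic propositions occurring in $\sub(X)$.

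I would then define an equivalence $\sim$ on $B$ by $t \sim t'$ iff $T(t) = T(t')$; this partitions $B$ into at most $2^{|\sub(X)|}$ classes. Pick one representative per class, with $s$ itself chosen as the representative of $[s]$, and enumerate them as $t_1 = s, t_2, \ldots, t_k$. Set $M' = (S',P',v')$ with $S' = \{t_1,\ldots,t_k\}$, $v'(t_i) = v(t_i)$, and $P'$ the deterministic cycle $t_i \mapsto t_{(i \bmod k)+1}$ (a self-loop if $k=1$); take $s' = t_1$. By construction $|S'| \le 2^{|\sub(X)|}$, and $S'$ is itself a BSCC of $M'$, so path formulae in $M'$ again satisfy the $0/1$ dichotomy.

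The core is a structural induction on $\psi \in \sub(X)$ establishing that $t_i \models_{M'} \psi$ iff $t_i \models_{M} \psi$ for every representative $t_i$. Atomic and Boolean cases are immediate from $v' = v|_{S'}$ and the induction hypothesis. For $\psi = \opf_{\op r}\varphi$, the dichotomy reduces both sides to existential statements: $t_i \models_{M} \psi$ iff some $u \in B$ satisfies $\varphi$ in $M$, and $t_i \models_{M'} \psi$ iff some $t_j \in S'$ satisfies $\varphi$ in $M'$. The induction hypothesis identifies the latter with ``some representative $t_j$ satisfies $\varphi$ in $M$'', and since every $\sim$-class of $B$ is represented in $S'$ and $\sim$-equivalent states agree on $\sub(X)$, this is equivalent to ``some $u \in B$ satisfies $\varphi$ in $M$''. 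The $\opg$ case is symmetric. Specialising to $\psi = \varphi \in X \subseteq \sub(X)$ gives $s' \models_{M'} \varphi$ for every $\varphi \in X$, as required.

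No step presents a genuine obstacle; the most delicate bookkeeping is aligning the three equivalences in the inductive step for probabilistic subformulae (uniform $0/1$ across $B$ in $M$, uniform $0/1$ across $S'$ in $M'$, and matching types via the representative map), but this is automatic once one representative per $\sim$-class has been fixed. The minor preliminary work -- verifying that the disallowed trivial constraints suffice to collapse every $\m(t \models \Phi) \in \{0,1\}$ into a definite truth value of $P(\Phi) \op r$ -- is routine case analysis on $\op \in \{\geq, >\}$.
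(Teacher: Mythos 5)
Your proof is correct and follows essentially the same route the paper intends: the paper only sketches this lemma via the remark that runs in a BSCC visit all its states infinitely often almost surely, so path formulae have probability $0$ or $1$ uniformly across the BSCC, and one can quotient by satisfaction types. Your fleshed-out version (representatives arranged in a deterministic cycle, structural induction transferring types, and the observation that the forbidden trivial constraints make the $0/1$ dichotomy decide every probabilistic subformula) fills in exactly the details the paper leaves implicit.
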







Now we prove the main result.

\begin{theorem}
    Let $\calL$ be a progressive fragment, $\fullmc$ a
    finite Markov chain, $s \in S$ a state, and $X$ 
    a finite subset of $\calL$ such that $X = UC_s(X)$ and $s
    \models \vp$ for every $\vp \in X$. Then there exists a
    Markov chain $M' = (S',P',v')$ and its state $s' \in S'$ such
    that $\vert S' \vert \leq 2^{b(X)} \cdot \frac{b(X)^{\| X
    \|_s+1}-1}{b(X)-1}$ and $s' \models \vp$ for every $\vp \in X$.
\end{theorem}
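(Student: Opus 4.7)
The plan is to prove the bound by strong induction on the progress measure $\|X\|_s$, closely following the construction sketched around Figure~\ref{fig-loop} in Section~\ref{sec-results}. In the base case, when $\|X\|_s$ is minimal, unfolding Definition~\ref{def-measure} forces $\psub(X)=\Deg_s(X)=\cf_s(X)=\emptyset$, so $X$ reduces to a set of Boolean combinations of atomic propositions; since $s\models X$, a single-state Markov chain whose valuation agrees with $v(s)$ on the propositions occurring in $X$ suffices, and the stated size bound is trivial. A complementary shortcut is available whenever the current state lies inside a BSCC of the ambient chain: Lemma~\ref{prob_bounded_bscc} yields a model of size at most $2^{|\sub(X)|}\le 2^{b(X)}$ by Lemma~\ref{size_b_subx}, so in the inductive step we may focus on the non-trivial case.

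For the inductive step, we invoke progressivity of $\mathcal{L}$ on $X$ and $s$ to obtain a progress loop $L_0,\ldots,L_n$ and then realize the gadget of Figure~\ref{fig-loop}: a cycle of states $\ell_0,\ldots,\ell_n$ whose valuation at $\ell_i$ is exactly the set of atomic propositions listed in $L_i$ (well-defined by the first clause of condition~(3) of Definition~\ref{def-loop}), together with outgoing edges from $\ell_n$ to Carath\'{e}odory witnesses $t_1,\ldots,t_m$ supplied by the decomposition $\alpha_s\le\sum_k p_k\,\alpha_{t_k}$ of Section~\ref{sec-results}. The loop re-entry probability $1-\sum_k x_k$ is pushed close to $1$ by choosing $\varrho$ slightly below $1-\max r$ across all $\opf_{\op r}$-formulae in the loop with $r\ne 1$, and we set $x_k=p_k\varrho$. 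For every successor we recursively apply the induction hypothesis to $X_{t_k}=UC_{t_k}(\DL_{t_k})$, which lies in $\mathcal{L}$ by the closure conditions imposed on PCTL fragments. Lemma~\ref{size_norm_decreases} combined with Lemma~\ref{loop_dl_leq_x} certifies $\|X_{t_k}\|_{t_k}<\|X\|_s$, using condition~(5) to supply $s\not\models\psi$ for the relevant $\opf_{\op r}\psi$ witnesses and condition~(6) to supply the required path from $s$ to $t_k$.

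The main obstacle is the careful bookkeeping showing $\ell_i\models L_i$ for every $i$, whence $s'\models X$ at whichever $\ell_i$ realizes condition~(1). The propositional and Boolean clauses of $L_i$ follow immediately from condition~(3). For $\opg_{\op r}\varphi\in L_i$, the last clause of (3) forces $\varphi\in L_j$ for every $j$, while the recursively constructed subchains at the $t_k$ inherit the matching $\opg_{=1}\varphi$ demand via $\DL$ and condition~(4); hence every run from $\ell_i$ satisfies $\opg\varphi$ almost surely. For $\opf_{\op r}\varphi\in L_i$ we split along the three clauses defining $\DL$: if $\varphi\in L_j$ for some $j$, the calibration of $\varrho$ makes the loop visit $\ell_j$ with probability above the threshold $r$; if $\varphi\notin L_0\cup\cdots\cup L_n$, the Carath\'{e}odory inequality $\alpha_s\le\sum_k p_k\,\alpha_{t_k}$ transfers the necessary probability mass from $s$ to $\ell_n$ after multiplication by $\varrho$; the most delicate subcase is $\opf_{=1}\varphi$, where the probability must be exactly $1$ and condition~(6) is essential to guarantee a witness path inside the recursive subchains compatible with the active $\opg$-constraints.

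The final size bound is an arithmetic clean-up. Let $N(X,s)$ denote the inductive bound. The constructed chain has at most $(n+1)+\sum_k N(X_{t_k},t_k)$ states; since $n+1\le |\sub(X)|+1\le b(X)$ by Lemma~\ref{size_b_subx}, $m\le u+1\le |\psub(X)|+1\le b(X)$, each $\|X_{t_k}\|_{t_k}\le\|X\|_s-1$, and $b(X_{t_k})\le b(X)$, plugging these into the one-step recurrence and invoking Lemma~\ref{size_b_leq} telescopes the geometric series into the claimed bound $2^{b(X)}\cdot\frac{b(X)^{\|X\|_s+1}-1}{b(X)-1}$.
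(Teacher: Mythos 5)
Your overall strategy --- well-founded induction on $\|\cdot\|_s$, realizing the progress loop as in Fig.~\ref{fig-loop}, handling the Carath\'{e}odory successors by the induction hypothesis via Lemmas~\ref{size_norm_decreases} and~\ref{loop_dl_leq_x}, and BSCC successors by Lemma~\ref{prob_bounded_bscc} --- is the same as the paper's. However, two concrete steps are wrong as written. First, the base case: $\|X\|_s$ being minimal (equal to $1$) forces only $\Deg_s(X)=\cf_s(X)=\emptyset$; since $|\Deg_s(X)|$ \emph{multiplies} the $\psub(X)$ term in Definition~\ref{def-measure}, it does not force $\psub(X)=\emptyset$. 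For instance $X=\{\opg_{=1}a,\ a\}$ with $s\models\opg_{=1}a$ has $\|X\|_s=1$ but is not a set of Boolean combinations of atomic propositions, so ``a single-state Markov chain whose valuation agrees with $v(s)$'' is not justified by your argument. The paper needs no separate base case: the well-founded induction is uniform, and the recursion bottoms out because every successor that is not in a BSCC strictly decreases the measure, while BSCC successors never recurse.

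Second, your size accounting uses $n+1\le|\sub(X)|+1$ for the loop length. Condition~(2) of Definition~\ref{def-loop} only says the $L_i$ are pairwise different subsets of $\sub(X)$, so the correct bound is $n+1\le 2^{|\sub(X)|}\le 2^{b(X)}$; with that correction the recurrence $|S'|\le 2^{b(X)}+b(X)\cdot\max_t|S_t|$ still telescopes to the stated bound, but the inequality you assert is false. Two smaller points: the path from $s$ to $t_k$ needed for Lemma~\ref{size_norm_decreases} comes from the reachability of the Carath\'{e}odory witnesses in the original chain, not from condition~(6) --- condition~(6) is what makes $\|\DL\|_s\le\|X\|_s$ (Lemma~\ref{loop_dl_leq_x}) go through; and for $\opg_{\op r}\varphi$ with $r<1$ your claim that ``every run from $\ell_i$ satisfies $\opg\varphi$ almost surely'' is too strong, since the successors only inherit $P(\opg\varphi)\ge\alpha_{t_k}(\cdot)$; this case needs the same quantitative comparison through the Carath\'{e}odory inequality as the $\opf$-formulae in $\DL$, not an almost-sure argument.
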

\begin{proof}

    We use well-founded induction. Formally, we define a relation
    $<$ on the set of pairs of states and finite subsets of
    $\calL$ as follows: $(s_1,X_1) < (s_2,X_2)$ if and only if
    $\| X_1 \|_{s_1} < \| X_2 \|_{s_2}$, where the latter $<$ is
    the standard ordering on $\nat$. It is not hard to see that
    $<$ is well-founded. We now proceed with the proof.

    Since $\calL$ is progressive and $M,s,X$ satisfy the
    required conditions, there exists a progress loop
    $\mathcal{L} = L_0 L_1 \ldots L_n$ for $M,s,X$.

    By the definition of progressive loops, $L_i \subseteq sub(X)$ for all $i \in \{0,1,\ldots,n\}$.
    Then $L_0 \cup L_1 \cup \ldots \cup L_n \subseteq sub(X)$.
    By its definition, $\dl \subseteq L_0 \cup L_1 \cup \ldots \cup L_n$.
    Therefore, $\dl \subseteq sub(X)$.
    Then $\vert \dl \vert \leq \vert sub(X) \vert$.

    Note that $X$ is finite, which means $sub(X)$ is finite, which means $\dl$ is finite.
    Consider $p(\dl)$.
    Since $\dl$ is finite, we know $p(\dl)$ is a finite set of path formulae.
    As sketched out in Section \ref{sec-results},
    there exists a set of states $T \subseteq S$ and $\alpha \colon T \rightarrow [0,1]$ such that the following holds:

    \begin{enumerate}
        \item $\sum_{t \in T} \alpha(t) = 1$,
        \item $0 < \vert T \vert \leq \vert p(\dl) \vert + 1$,
        \item for every $\Phi \in p(\dl)$, it holds that $\m(s \models \Phi) \leq \sum_{t \in T} \alpha(t) \cdot \m(t \models \Phi)$,
        \item for every $t \in T$, there is a path from $s$ to $t$,
        \item for every $t \in T$, it holds that
            $t \in B$ for some BSCC $B \subseteq S$ or
            $t \models \psi$ for some formula $\psi$ such that $\opf \psi \in p(\dl)$.
    \end{enumerate}

    For each $t \in T$, define $X_t = UC_t \circ \theta_t(\dl)$.
    We will show the following holds for every $t \in T$:

    \begin{itemize}
        \item $X_t$ is finite. We have shown $\dl$ is finite, and the rest is easy to see.
        \item $t \models \vp$ for every $\vp \in X_t$. Observe that $t \models \vp$ for every $\vp \in \theta_t(\dl)$ by the definition of $\theta_t$, and the rest is easy to see.
        \item $X_t \subseteq \calL$. Note $X \subseteq \calL$ and $\calL$ is a fragment by the theorem statement. Then $\calL$ is closed under subformulae and changes in probability constraints by our definition of fragments. We have shown $\dl \subseteq sub(X)$. It follows that $\dl \subseteq \calL$, and the rest is easy to see.
        \item $X_t = UC_t (X_t)$. Holds because $UC_t$ is idempotent and $X_t = UC_t \circ \theta_t(\dl)$.
        \item $X_t = U_t (X_t)$. Holds because $U_t$ is idempotent and $X_t = UC_t \circ \theta_t(\dl)$.
        \item $b(X_t) \leq b(X)$. Recall $\dl \subseteq sub(X)$.
            Clearly $sub(\dl) \subseteq sub(X)$.
            Then $nsub(\dl) \subseteq nsub(X)$ and $psub(\dl) \subseteq psub(X)$.
            Similarly, $\bigcup_{\vp \in \dl} sub(\vp) \setminus \{\vp\} \subseteq \bigcup_{\vp \in X} sub(\vp) \setminus \{\vp\}$. 
            It is not too hard to see that application of $\theta_t,C_t,U_t$ preserves these inclusions as well.
            Then, we may conclude that $nsub(X_t) \subseteq nsub(X)$ and $psub(X_t) \subseteq psub(X)$ and also that
            $\bigcup_{\vp \in X_t} sub(\vp) \setminus \{\vp\} \subseteq \bigcup_{\vp \in X} sub(\vp) \setminus \{\vp\}$.
            It follows that $b(X_t) \leq b(X)$.
    \end{itemize}

    We want to show that for every $t \in T$,
    there exists a Markov chain $M_t = (S_t,P_t,v_t)$ and $s_t \in S_t$ such that
    $\vert S_t \vert \leq 2^{b(X)} \cdot \frac{b(X)^{\|X\|_s}-1}{b(X)-1}$ and $s_t \models \vp$ for every $\vp \in X_t$.
    Let $t \in T$.
    By $(5)$, we have $t \in B$ for some BSCC $B \subseteq S$ or $t \models \psi$ for some formula $\psi$ such that $\opf \psi \in p(\dl)$. 

    Suppose $t \in B$ for some BSCC $B \subseteq S$.
    We want to apply Lemma \ref{prob_bounded_bscc}.
    Note that $\fullmc$ is a finite Markov chain, $t \in S$ is a state, and $X_t$ is a finite set of formulae.
    We know $t \in B$ and $B \subseteq S$ is a BSCC, and we have shown $t \models \vp$ for every $\vp \in X_t$.
    Then, by Lemma \ref{prob_bounded_bscc}, there exists a Markov chain $M_t = (S_t,P_t,v_t)$ and $s_t \in S_t$
    such that $\vert S_t \vert \leq \vert \pw(sub(X_t)) \vert$ and $s_t \models \vp$ for every $\vp \in X_t$.
    It follows that $\vert S_t \vert \leq 2^{\vert sub(X_t) \vert}$.
    We want to apply Lemma \ref{size_b_subx}.
    Note that $M$ is a Markov chain, $t \in S$ is a state, and $X_t$ is a finite set of formulae such that $X_t = U_t(X_t)$.
    Then $\vert sub(X_t) \vert + 1 \leq b(X_t)$ by Lemma \ref{size_b_subx}.
    Clearly $\vert sub(X_t) \vert \leq b(X_t)$, which means $2^{\vert sub(X_t) \vert} \leq 2^{b(X_t)}$.
    Then $\vert S_t \vert \leq 2^{b(X_t)}$.
    We want to apply Lemma \ref{size_b_leq}.
    Note that $X_t,X$ are finite sets of formulae such that $b(X_t) \leq b(X)$
    and $\| X \|_s, \| X \|_s$ (we take $n_1=n_2=\|X\|_s$) are natural numbers such that $0 < \| X \|_s \leq \| X \|_s$.
    Then $2^{b(X_t)} \leq 2^{b(X)} \cdot \frac{b(X)^{\|X\|_s}-1}{b(X)-1}$ by Lemma \ref{size_b_leq},
    and so $\vert S_t \vert \leq 2^{b(X)} \cdot \frac{b(X)^{\|X\|_s}-1}{b(X)-1}$.

    Otherwise, $t \models \psi$ for some formula $\psi$ such that $\opf \psi \in p(\dl)$.
    We want to use the induction hypothesis.
    First, note that $\| \dl \|_s \leq \|X\|_s$ by Lemma \ref{loop_dl_leq_x}.
    We want to apply Lemma \ref{size_norm_decreases}.
    We need to show all three conditions of Lemma \ref{size_norm_decreases} hold. 
    We know $\fullmc$ is a Markov chain, $s,t \in S$ are states, and $\dl$ is a finite set of formulae.
    Suppose $\vp \in \dl$ and $\vp$ is of form $P(\opf \psi) \op r$.
    Then $s \not \models \psi$ by the definition of progress loops.
    We need to show existence of $\vp \in \dl$ such that $\vp$ is of form $P(\opf \psi) \op r$ and $t \models \psi$.
    Recall $t \models \psi$ for some formula $\psi$ such that $\opf \psi \in p(\dl)$.
    Since $\opf \psi \in p(\dl)$, we know there exists some $\vp \in \dl$ such that $\vp$ is of form $P(\opf \psi) \op r$.
    Then we have found $\vp \in \dl$ such that $\vp$ is of form $P(\opf \psi) \op r$ and $t \models \psi$. 
    Finally, we know $t \in T$, which means there is a path from $s$ to $t$ by $(4)$. 
    Then, by Lemma \ref{size_norm_decreases}, we have $\| UC_t \circ \theta_t (\dl) \|_t < \|\dl \|_s$.
    In other words, $\| X_t \|_t < \| \dl \|_s$.
    We already know $\| \dl \|_s \leq \| X \|_s$, and so $\| X_t \|_t < \| X \|_s$.
    Clearly $t \in S$ and $X_t$ is a finite subset of $\calL$, and we know $\| X_t \|_t < \| X \|_s$.
    We also know $X_t = UC_t(X_t)$ and $t \models \vp$ for every $\vp \in X_t$.
    Then, by the induction hypothesis, there exists a Markov chain $M_t = (S_t,P_t,v_t)$ and $s_t \in S_t$
    such that $\vert S_t \vert \leq 2^{b(X_t)} \cdot \frac{b(X_t)^{\| X_t \|_t + 1 }-1}{b(X_t)-1}$
    and $s_t \models \vp$ for every $\vp \in X_t$.
    Note that $\| X_t \|_t < \| X \|_s$ implies $\| X_t \|_t + 1 \leq \| X \|_s$.
    Observe that $X_t,X$ are finite sets of formulae such that $b(X_t) \leq b(X)$
    and $\| X_t \|_t + 1, \| X \|_s$ are natural numbers such that $0 < \| X_t \|_t + 1 \leq \|X\|_s$.
    Then $2^{b(X_t)} \cdot \frac{b(X_t)^{\| X_t \|_t + 1 }-1}{b(X_t)-1} \leq 2^{b(X)} \cdot \frac{b(X)^{\|X\|_s}-1}{b(X)-1}$ by Lemma \ref{size_b_leq}.
    Therefore, $\vert S_t \vert \leq 2^{b(X)} \cdot \frac{b(X)^{\|X\|_s}-1}{b(X)-1}$.

    We have established that for every $t \in T$,
    there exists a Markov chain $M_t = (S_t,P_t,v_t)$ and $s_t \in S_t$ such that
    $\vert S_t \vert \leq 2^{b(X)} \cdot \frac{b(X)^{\|X\|_s}-1}{b(X)-1}$ and $s_t \models \vp$ for every $\vp \in X_t$.

    We will now define $M' = (S',P',v')$ and $s' \in S'$ and show they satisfy the properties required by the theorem statement.
    Note that $0 < \vert T \vert \leq \vert p(\dl) \vert + 1$ by $(2)$. 
    We know $p(\dl)$ is finite, and so $T$ is finite and non-empty.
    Then, we can enumerate elements of $T$ so that $T = \{ t_0, t_1, \ldots, t_m \}$.
    For convenience, define $L = \{ L_0, L_1, \ldots, L_n \}$.

    Define $S' = L \cup S_{t_0} \cup S_{t_1} \cup \ldots \cup S_{t_m}$.
    It follows that $\vert S' \vert \leq \vert L \vert + \sum_{t \in T} \vert S_t \vert$.
    Without loss of generality, we may assume that $L,S_{t_0}, S_{t_1}, \ldots, S_{t_m}$ are pairwise disjoint
    (otherwise, we may "rename" states in each $S_{t_i}$ so that they are and alter the corresponding $P_{t_i}$ and $v_{t_i}$ accordingly).

    For every $\kappa \in L$, define $v'(\kappa) = \kappa \cap AP$.
    For every $t \in T$ and $\kappa \in S_t$, define $v'(\kappa) = v_t(\kappa)$.

    For every $t \in T$ and $\kappa \in S_t$ and $\kappa' \in S'$, define $P'(\kappa,\kappa')$ as follows:
    \begin{equation*}
        P'(\kappa,\kappa') =
        \begin{cases}
            P_t(\kappa,\kappa') & \text{ if } \kappa' \in S_t \\
            0                   & \text{ otherwise}
        \end{cases}.
    \end{equation*}
    Intuitively, this means that the transition probabilities for states of $S_{t}$ in $M'$ are the same as in $M_{t}$ for every $t \in T$.
    Note that by the definition of progress loops, $L_0,L_1,\ldots,L_n$ are pairwise distinct.
    For every $i \in \nat$ such that $0 \leq i < n$ and every $\kappa \in S'$, define $P'(L_i,\kappa)$ as follows:
    \begin{equation*}
        P'(L_i,\kappa) =
        \begin{cases}
            1 & \text{ if } \kappa = L_{i+1} \\
            0 & \text{ otherwise}
        \end{cases}.
    \end{equation*}
    Intuitively, this means that every state of the loop transitions to its successor with probability $1$,
    except for the exit state of the loop $L_n$.
    It remains to define transition probabilities for the exit state of the loop, that is, $P'(L_n,\kappa)$ for every $\kappa \in S'$.
    First, we will define $\varepsilon \in \mathbb{R}$, which intuitively represents the probability of remaining in the loop when exiting $L_n$.
    Define $R$ as follows:
    \begin{equation*}
        R = \{ r \in [0,1) \mid \exists \vp \in L_0 \cup L_1 \cup \ldots \cup L_n \text{ such that } \vp \text{ is of form } P(\opf \psi) \op r \}.
    \end{equation*}
    Recall that $L_0 \cup L_1 \cup \ldots \cup L_n \subseteq sub(X)$ and $sub(X)$ is finite.
    It follows that $R$ is finite.
    Now, define $\varepsilon \in \mathbb{R}$ in the following way.
    If $R$ is empty, define $\varepsilon \in \mathbb{R}$ to be an arbitrary real number in $(0,1)$.
    Otherwise $R$ is not empty.
    Then $R$ is a finite and non-empty set of real numbers, and so $R$ has a maximum, which we denote by $max(R)$.
    Since every $r \in R$ is less than $1$, we know $max(R) < 1$.
    Define $\varepsilon \in \mathbb{R}$ to be an arbitrary real number in $(max(R),1)$.
    For every $\kappa \in S'$, define $P'(L_n,\kappa)$ as follows:
    \begin{equation*}
        P'(L_n,\kappa) =
        \begin{cases}
            \varepsilon                 & \text{ if } \kappa = L_0 \\
            (1-\varepsilon) \alpha(t)   & \text{ if } \kappa = s_t \text{ for some } t \in T \\
            0                           & \text{ otherwise}
        \end{cases}.
    \end{equation*}

    We will now show that $\vert S' \vert \leq 2^{b(X)} \cdot \frac{b(X)^{\| X \|_s+1}-1}{b(X)-1}$.
    By its definition, $L = \{ L_0, L_1, \ldots, L_n \}$ and $L_i \subseteq sub(X)$ for every $i \in \{0,1,\ldots,n\}$.
    Then $\vert L \vert \leq \vert \pw(sub(X)) \vert$.
    It follows that $\vert L \vert \leq 2^{\vert sub(X) \vert}$.
    Since $X = UC_s(X)$ by the theorem statement and $U_s$ is idempotent, we have $X = U_s(X)$.
    It is not hard to see that $\vert sub(X) \vert +1 \leq b(X)$ by Lemma \ref{size_b_subx}.
    It follows that $\vert L \vert \leq 2^{b(X)}$.
    We have already shown $\vert \dl \vert \leq \vert sub(X) \vert$.
    Then $\vert \dl \vert +1 \leq \vert sub(X) \vert +1$, which means $\vert  \dl \vert +1 \leq b(X)$.
    Recall that $\vert T \vert \leq \vert p(\dl) \vert + 1$ by $(2)$. 
    Clearly $\vert p(\dl) \vert \leq \vert \dl \vert$.
    It follows that $\vert T \vert \leq b(X)$.

    Recall that $\vert S' \vert \leq \vert L \vert + \sum_{t \in T} \vert S_t \vert$.
    Then $\vert S' \vert \leq 2^{b(X)} + \sum_{t \in T} \vert S_t \vert$ since $\vert L \vert \leq 2^{b(X)}$.
    Since $\vert T \vert \leq b(X)$ and $\vert S_t \vert \leq 2^{b(X)} \cdot \frac{b(X)^{\|X\|_s}-1}{b(X)-1}$ for every $t \in T$,
    it follows that $\vert S' \vert \leq 2^{b(X)} + b(X) \cdot 2^{b(X)} \cdot \frac{b(X)^{\|X\|_s}-1}{b(X)-1}$.
    It is not hard to see that
    \begin{flalign*}
        & 2^{b(X)} + b(X) \cdot 2^{b(X)} \cdot \frac{b(X)^{\|X\|_s}-1}{b(X)-1} \\
        & = 2^{b(X)} + 2^{b(X)} \cdot \frac{b(X)^{\|X\|_s+1}-b(X)}{b(X)-1} \\
        & = 2^{b(X)} \cdot (1 + \frac{b(X)^{\|X\|_s+1}-b(X)}{b(X)-1}) \\
        & = 2^{b(X)} \cdot (\frac{b(X)-1}{b(X)-1} + \frac{b(X)^{\|X\|_s+1}-b(X)}{b(X)-1}) \\
        & = 2^{b(X)} \cdot \frac{b(X)^{\|X\|_s+1}-1}{b(X)-1}. \\
    \end{flalign*}
    In conclusion, $\vert S' \vert \leq 2^{b(X)} \cdot \frac{b(X)^{\|X\|_s+1}-1}{b(X)-1}$.

    It remains to define $s' \in S'$ and show $s' \models \vp$ for every $\vp \in X$.
    Since $\mathcal{L}$ is a progress loop for $M,s,X$, it holds that $X \subseteq L_i$ for some $i \in \{0,1,\ldots,n\}$.
    Define $s'$ to be such $L_i$.
    Clearly $X \subseteq s'$ and $s' \in L$ and $s' \in S'$.
    We want to show $s' \models \vp$ for every $\vp \in X$.
    Since $X \subseteq s'$, it is sufficient to show $s' \models \vp$ for every $\vp \in s'$.
    We will show the following stronger statement:
    $$\forall \vp \in L_0 \cup L_1 \cup \ldots \cup L_n
      \text{ and }
      \forall \kappa \in L
      \text{ it holds that }
      \vp \in \kappa \text{ implies } \kappa \models \vp.
    $$

    We use induction on the structure of $\vp$.
    The cases when $\vp = a$, $\vp = \neg a$, $\vp = \vp_1 \wedge \vp_2$, $\vp = \vp_1 \vee \vp_2$
    follow directly from the definition of progress loops and the induction hypothesis.
    The case when $\vp = P(\opg \psi) \op r$ is analogous to the case when $\vp = P(\opf \psi) \op r$, and
    we show the case when $\vp = P(\opf \psi) \op r$.
    We want to show $\kappa \models P(\opf \psi) \op r$.
    We split the proof into the case when $P(\opf \psi) \op r \not \in \dl$ and the case when $P(\opf \psi) \op r \in \dl$.

    Suppose $P(\opf \psi) \op r \not \in \dl$. Suppose $r<1$.
    It follows that $r \in R$, which means $r \leq max(R)$.
    Since $max(R) < \varepsilon$ by its construction, we have  $r<\varepsilon$.
    Since $P(\opf \psi) \op r \not \in \dl$, we know $\psi \in L_0 \cup L_1 \cup \ldots \cup L_n$.
    Then $\psi \in L_i$ for some $i \in \{0,1,\ldots,n\}$.
    Observe that $\psi \in L_0 \cup L_1 \cup \ldots \cup L_n$ and $L_i \in L$ and $\psi \in L_i$.
    Then $L_i \models \psi$ by the induction hypothesis.
    It is not hard to see that $\m(\kappa \models \opf \psi) \geq \varepsilon$ by construction of $M'$.
    Since $\varepsilon > r$, we have $\m(\kappa \models \opf \psi) > r$.
    It follows that $\m(\kappa \models \opf \psi) \op r$.
    Therefore $\kappa \models P(\opf \psi) \op r$.
    Otherwise $r=1$, which means $P(\opf \psi) \op r$ is $P(\opf \psi) =1$.
    Since $\kappa \in L$, we know $\kappa = L_i$ for some $i \in \{0,1,\ldots,n\}$.
    Since $P(\opf \psi)=1 \not \in \dl$, we know $\psi \in L_i \cup \ldots \cup L_n$.
    Then $\psi \in L_j$ for some $j \in \{i,\ldots,n\}$.
    Observe that $\psi \in L_0 \cup L_1 \cup \ldots \cup L_n$ and $L_j \in L$ and $\psi \in L_j$.
    Then $L_j \models \psi$ by the induction hypothesis.
    It follows that $\m(\kappa \models \opf \psi) = 1$ by construction of $M'$, which means $\kappa \models P(\opf \psi) \op r$.

    Otherwise $P(\opf \psi) \op r \in \dl$.
    Observe that by construction of $M'$, it holds that
    \begin{equation*}
        \m(\kappa \models \opf \psi) \geq \sum_{t \in T} \alpha(t) \cdot \m(M',s_t \models \opf \psi).
    \end{equation*}
    Also by construction of $M'$, it holds that $\m(M',s_t \models \opf \psi) = \m(M_t,s_t \models \opf \psi)$ for every $t \in T$.
    Then
    \begin{equation*}
        \m(\kappa \models \opf \psi) \geq \sum_{t \in T} \alpha(t) \cdot \m(M_t,s_t \models \opf \psi).
    \end{equation*}

    We want to show $\m(M_t,s_t \models \opf \psi) \geq \m(t \models \opf \psi)$ for every $t \in T$.
    Let $t \in T$. If $\m(t \models \opf \psi)=0$, then the statement clearly holds.
    Otherwise $\m(t \models \opf \psi)>0$.
    Recall $P(\opf \psi) \op r \in \dl$, which means $\opf \psi \in p(\dl)$.
    Since $\opf \psi \in p(\dl)$ and $\m(t \models \opf \psi)>0$, we have $P(\opf \psi) \geq \m(t \models \opf \psi) \in \theta_t(\dl)$.
    Recall that $X_t = UC_t \circ \theta_t(\dl)$.
    It is not hard to see that $P(\opf \psi) \geq \m(t \models \opf \psi) \in X_t$.
    Then $M_t,s_t \models P(\opf \psi) \geq \m(t \models \opf \psi)$.
    Therefore, $\m(M_t,s_t \models \opf \psi ) \geq \m(t \models \opf \psi)$.

    We have established that $\m(M_t,s_t \models \opf \psi) \geq \m(t \models \opf \psi)$ for every $t \in T$.
    It follows that
    \begin{equation*}
        \sum_{t \in T} \alpha(t) \cdot \m(M_t,s_t \models \opf \psi) \geq \sum_{t \in T} \alpha(t) \cdot \m(t \models \opf \psi),
    \end{equation*}
    which means that
    \begin{equation*}
        \m(\kappa \models \opf \psi) \geq \sum_{t \in T} \alpha(t) \cdot \m(t \models \opf \psi).
    \end{equation*}
    Recall $P(\opf \psi) \op r \in \dl$.
    Then $\opf \psi \in p(\dl)$.
    Then, by $(3)$, we have
    \begin{equation*}
    \sum_{t \in T} \alpha(t) \cdot \m(t \models \opf \psi) \geq \m(s \models \opf \psi).
    \end{equation*}
    It follows that
    \begin{equation*}
        \m(\kappa \models \opf \psi) \geq \m(s \models \opf \psi).
    \end{equation*}
    Since $P(\opf \psi) \op r \in \dl$, we know $s \models P(\opf \psi) \op r$ by the definition of progress loops.
    That means $\m(s \models \opf \psi) \op r$.
    It follows that $\m(\kappa \models \opf \psi) \op r$, which means $\kappa \models P(\opf \psi) \op r$, which is what we wanted to show.
\end{proof}

\section{Encoding PCTL bounded satisfiability in existential theory of the reals}
\label{app-encoding}

In this section, we sketch a (non-deterministic) polynomial space algorithm deciding bounded PCTL satisfiability. Let $\varphi$ be a PCTL formula and $n \in \N$ a bound on the size of the model. Without restrictions\footnote{Observe that every occurrence of $\opg_{\op r} \varphi$ can be replaced with $\opf_{\triangleleft} \neg \varphi$.}, we assume that $\varphi$ is constructed according to the abstract syntax equation
\[
   \varphi \ ::= \   a \mid \neg a \mid \varphi_1 \wedge \varphi_2 \mid \varphi_1 \vee \varphi_2 \mid
                     \opf_{\bowtie} r
\]
where ${\bowtie} \in \{\geq,>,\leq,<\}$. We disregard the trivial probability constraints `${\geq} 0$',
`${>}1$', `${<}0$', and `$\leq 1$'. 

The algorithm starts by guessing a finite directed graph $(V,{\tran{}})$, where $V = \{v_1,\ldots,v_m\}$ and $m \leq n$. Furthermore, for every subformula $\psi \in \sub(\varphi)$, the algorithm guesses a subset $V(\psi) \subseteq V$ so that 
\begin{itemize}
\item $V(a) = V \setminus V(\neg a)$ for every atomic proposition $a$ such that $\neg a \in \sub(\varphi)$;
\item $V(\xi_1 \wedge \xi_2) = V(\xi_1) \cap V(\xi_2)$ for every $\xi_1 \wedge \xi_2 \in \sub(\varphi)$;
\item $V(\xi_1 \vee \xi_2) = V(\xi_1) \cup V(\xi_2)$ for every $\xi_1 \vee \xi_2 \in \sub(\varphi)$;
\item $V(\varphi) \neq \emptyset$.
\end{itemize}
Then, the algorithm constructs the following formula of existential theory of the reals, where $k = |E|$ and $\fsub(\varphi)$ is the set of all subformulae of $\varphi$ of the form $\opf_{\bowtie r} \psi$.
\[
   \exists x_1,\ldots,x_k ~~:~~ \bigwedge_{i=1}^n 0 < x_i \leq 1 ~~\wedge~~ \bigwedge_{v \in V} \textit{Distr}(v) ~~\wedge~~ \bigwedge_{\psi \in \fsub(\varphi)} \textit{Correct}(V(\psi))
\]

The variables $x_1,\ldots,x_n$ represent the (positive) probability of edges. We write $v_i \tran{x_t} v_j$ to indicate that $x_t$ represents the probability of $v_i \tran{} v_j$.

The formula $\textit{Distr}(v)$ says that the sum of the variables associated with the outgoing edges of $v$ is equal to~$1$, i.e.,
\[
   \sum_{v \tran{x_t} v_j} x_t = 1
\]

The formula $\textit{Correct}(V(\opf_{\bowtie r}\psi))$ says that the set of vertices satisfying the formula $\opf_{\bowtie r}\psi$ is precisely $V(\opf_{\bowtie r}\psi))$, assuming that $V(\psi)$ is correct.

\begin{align*}
   \exists y_1,\ldots,y_n & ~~:~~ \bigwedge_{v_i \in V(\psi)} y_i {=} 1 ~~\wedge~~ 
                                \bigwedge_{v_i \in \nr(V(\psi))} y_i {=} 0\\[1ex]
                          &  ~~\wedge~~
                                \bigwedge_{v_i \in \oth(V(\psi))} y_i = \sum_{v_i \tran{x_t} v_j} x_t \cdot y_j\\[1ex]
                         &       ~~\wedge~~ \bigwedge_{v_i \in V(\opf_{\bowtie r}\psi)} y_i \bowtie r
                                ~~\wedge~~ \bigwedge_{v_i \not\in V(\opf_{\bowtie r}\psi)} y_i \not\bowtie r
\end{align*}     
Here, $\nr(V(\psi))$ is the set of all vertices $v \in V$ such that there is no path from $v$ to a state of $V(\psi)$ in $(V,{\tran{}})$, and $\oth(V(\psi) = V \setminus (V(\psi) \cup \nr(V(\psi)))$. Hence, 
the variable $y_i$ represents the probability of all runs initiated in $v_i$ visiting a vertex in $V(\psi)$.

Observe that the constructed formula belongs to existential theory of the reals and its size is polynomial in the size of $\varphi$ and $n$. Our algorithm outputs `yes' or `no' depending on whether the formula is valid or not (which is decidable in space polynomial in the size of $\varphi$ and $n$ \cite{Canny:Tarski-exist-PSPACE}). Thus, the existence of a model of $\varphi$ with at most $n$ states is decided in polynomial space. 



\end{document}